%

\documentclass[10pt]{article}
\usepackage{dcolumn}
\usepackage{bm}
\usepackage{verbatim}       

\usepackage[margin=4.3cm]{geometry}

\usepackage[dvips]{graphicx}

\usepackage{amsthm}

\usepackage{bm}
\usepackage{amstext}
\usepackage{psfrag}
\usepackage{amsmath}
\usepackage{mathtools}
\usepackage{amsfonts}
\usepackage{mathrsfs}
\newcommand{\p}{\partial}

\newcommand{\dd}{{\rm d}}

\renewcommand{\vec}[1]{\bm{#1}}

\newcommand{\bd}{\begin{definition}}                
\newcommand{\ed}{\end{definition}}                  
\newcommand{\bc}{\begin{corollary}}                 
\newcommand{\ec}{\end{corollary}}                   
\newcommand{\bl}{\begin{lemma}}                     
\newcommand{\el}{\end{lemma}}                       
\newcommand{\bp}{\begin{proposition}}            
\newcommand{\ep}{\end{proposition}}                
\newcommand{\bere}{\begin{remark}}                  
\newcommand{\ere}{\end{remark}}                     

\newcommand{\bt}{\begin{theorem}}
\newcommand{\et}{\end{theorem}}

\newcommand{\be}{\begin{equation}}
\newcommand{\ee}{\end{equation}}

\newcommand{\bit}{\begin{itemize}}
\newcommand{\eit}{\end{itemize}}
\newtheorem{theorem}{Theorem}[section]
\newtheorem{corollary}[theorem]{Corollary}
\newtheorem{lemma}[theorem]{Lemma}
\newtheorem{proposition}[theorem]{Proposition}
\theoremstyle{definition}
\newtheorem{definition}[theorem]{Definition}
\theoremstyle{remark}
\newtheorem{remark}[theorem]{Remark}


\begin{document}


\title{A geometrical introduction to screw theory}

\author{E. Minguzzi\thanks{
Dipartimento di Matematica Applicata ``G. Sansone'', Universit\`a
degli Studi di Firenze, Via S. Marta 3,  I-50139 Firenze, Italy.
E-mail: ettore.minguzzi@unifi.it} }


\date{}

\maketitle

\begin{abstract}
\noindent

This work introduces screw theory, a venerable but yet little known
theory aimed at describing rigid body dynamics. This formulation of
mechanics unifies in the concept of screw the translational and
rotational degrees of freedom of the body. It captures a remarkable
mathematical analogy between mechanical momenta and linear
velocities, and between forces and angular velocities. For instance,
it clarifies that angular velocities should be treated as applied
vectors and that, under the composition of motions, they sum with
the same rules of applied forces. This work provides a short and
rigorous introduction to screw theory intended to an undergraduate
and general readership.



\end{abstract}


\begin{flushleft}
Keywords: rigid body, screw theory, rotation axis, central axis,
twist, wrench. \\
MSC: 70E55, 70E60, 70E99.
\end{flushleft}

\tableofcontents

\section{Introduction}

The second law of Newtonian mechanics states that if $\vec{F}$ is
the force acting on a point particle of mass $m$ and $\vec{a}$ is
its acceleration, then $m \vec{a}=\vec{F}$. In a sense, the physical
meaning of this expression lies in its tacit assumptions, namely
that forces are vectors, that is, elements of a vector space, and as
such they sum. This experimental fact embodied in the second law is
what prevent us from considering the previous identity as a mere
definition of {\em force}.


Coming to the study of the rigid body, one can deduce the first
cardinal equation of mechanics $M \ddot{C}=\vec{F}$, where $C$ is
the affine point of the center of mass, $M$ is the total mass and
$\vec{F}^{ext}=\sum_i \vec{F}_i^{ext}$ is the resultant of the
external applied forces. This equation does not fix the dynamical
evolution of the body, indeed  one need to add the second cardinal
equation of mechanics $\dot{\vec{L}}(O)=\vec{M}(O)$, where
$\vec{L}(O)$ and $\vec{M}(O)$, are respectively, the total angular
momentum and the total mechanical momentum with respect to an
arbitrary fixed point $O$. Naively adding the applied forces  might
result in an incorrect calculation of $\vec{M}(O)$. As it is well
known, one must take into account the line of action of each force
$\vec{F}_i^{ext}$ in order to determine the {\em central axis},
namely the locus of allowed application points of  the resultant.

These considerations show that applied forces do not really form a
vector space. This unfortunate circumstance can be amended
considering, in place of the force, the field of mechanical momenta
that it determines (the so called {\em dynamical screw}). These type
of fields are constrained by the law which establishes the change of
the mechanical momenta under change of pole
\[
\vec{M}(P)-\vec{M}(Q)=\vec{F}\times (P-Q).
\]
An analogy between momenta and velocities, and between force
resultant and angular velocity is apparent considering the so called
{\em fundamental formula of the rigid body}, namely a constraint
which characterizes the velocity vector field of the rigid body
\[
\vec{v}(P)-\vec{v}(Q)=\vec{\omega} \times (P-Q).
\]
The correspondence can be pushed forward for instance by noting that
the concept of {\em instantaneous axis of rotation} is analogous to
that of {\em central axis}. Screw theory explores these analogies in
a systematic way and relates them to the Lie group of rigid motions
on the Euclidean space.

Perhaps, one of the most interesting consequences of screw theory is
that it allows us to fully understand that angular velocities should
be treated as  vectors applied to the instantaneous axis of
rotation, rather than as free vectors. This fact is not at al
obvious. Let us recall that the angular velocity is defined through
Poisson theorem, which states that, given a frame $K'$ moving with
respect to an absolute frame $K$, any normalized vector $\vec{e}'$
which is fixed with respect to $K'$ satisfies
\[
\frac{\dd \vec{e}'}{\dd t}=\vec{\omega} \times \vec{e}',
\]
in the original frame $K$, where $\vec{\omega}$ is unique. The
uniqueness allows us to  unambiguously define $\vec\omega$  as the
angular velocity of $K'$ with respect to $K$. As the vectors
$\vec{e}'$ are free, their application point is not fixed and so,
according to this traditional definition, $\vec{\omega}$ is not
given an application point.

This fact seems close to intuition. Indeed, let us consider
Foucault's 1851 famous experiment performed at the Paris
Observatory. By using a pendulum he was able to prove that the earth
 rotates with an angular velocity which coincides with that inferred from the observation of distant stars.
Of course, the choice of Paris was not essential, and the
measurement would have returned the same value for the angular
velocity were it performed in any other place on earth. In fact, the
reason for assigning to the angular velocity an application point in
the instantaneous axis of rotation becomes clear only in very
special applications, and in particular when the composition of
rigid motions is considered. This fact will be fully justified in
section \ref{njc}. Here we just wish to illustrate how, using the
analogy between forces and angular velocities, it is possible to
solve non-trivial problems on the composition of motions.

Consider, for instance, four frames $K_i$, $i=0,1,2,3$, where $K_0$
is the absolute frame and $K_{i+1}$, $i=0,1,2$,  moves with respect
to $K_i$ with an angular velocity $\vec{\omega}_{i,\, i+1}$. Let us
suppose that at the given instant of interest, and for  $i=0,1,2$,
the instantaneous axes of rotation of $K_{i+1}$ as seen from $K_i$,
lie all in the same plane as shown in figure \ref{fun}. We can apply
the well known rules of statics, for instance using the funicular
polygon \cite{timoshenko65,fasano78}, to obtain the angular velocity
$\vec{\omega}_{0,\, 3}$ and the instantaneous axis of rotation of
$K_3$ with respect to $K_0$.

%
%
%
%

\begin{figure}[ht]
\begin{center}
\includegraphics[width=9cm]{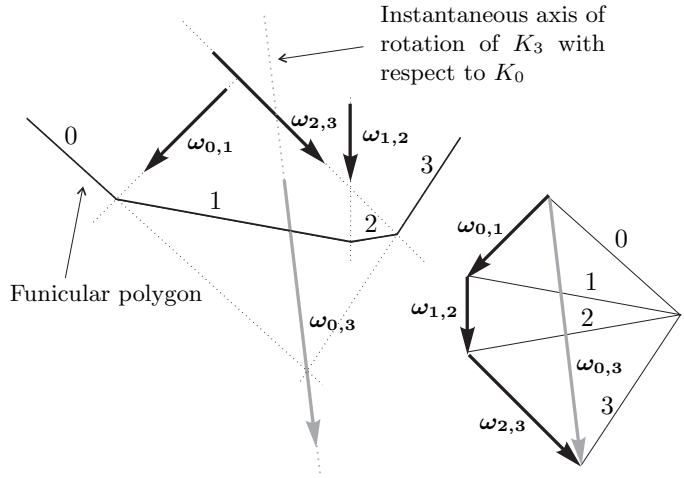}
\end{center}
\caption{Graphical determination of $K_3$ motion with respect to
$K_0$ by using the funicular polygon method. This method was
originally developed for finding the central axis in problems of
statics.} \label{fun}
\end{figure}

It is also interesting to observe if a frame $K_2$ rotates with
angular velocity $\vec{\omega}$ with respect to $K_1$, and $K_1$
rotates  with angular velocity $-\vec{\omega}$ with respect to
$K_0$, and if the two instantaneous axes of rotation are parallel
and separated by an arm of length $d$, then, at the given instant,
$K_2$ translates with velocity $\omega d$ in a direction
perpendicular to the plane determined by the two axes. As a
consequence, any act of translation can be reduced to a composition
of acts of pure rotation.

This result is analogous to the usual observation that two opposite
forces $\vec{F}$ and $-\vec{F}$ with arm $d$ generate a constant
mechanical momenta  of magnitude $dF$ and direction perpendicular to
the plane determined by the two forces. As a consequence, any
applied momenta can be seen as the effect of a couple of forces.

Of course, screw theory has other interesting consequences and
advantages. We invite the reader to discover and explore them in the
following sections.

The key ideas leading to screw theory included in this article have
been taught at a second year undergraduate course of ``Rational
Mechanics'' at the Faculty of Engineer of Florence University (saved
for the last technical section). We shaped this text so as to be
used  by our students for self study and by any other scholars who
might want to introduce screw theory in an undergraduate course.
Indeed, we believe that it is time to introduce this beautiful
approach to mechanics already at the level of undergraduate
University programs.

\subsection{Comments on previous treatments}

Screw theory is venerable (for an account of the early history see
\cite{dimentberg68}). It originated from the works of Euler, Mozzi
and Chasles, who discovered that any rigid motion can be obtained as
a rotation followed by a translation in the direction of the
rotation's axis (this is the celebrated Chasles's theorem which was
actually first obtained by Giulio Mozzi \cite{ceccarelli00}), and by
Poinsot, Chasles and M\"obius, who noted the analogy between forces
and angular velocities for the first time \cite{dimentberg68}.

It was developed and  reviewed by Sir R. Ball in his 1870 treatise
\cite{ball76}, and further developed, especially in connection with
its algebraic formulation, by Clifford, Kotel'nikov, Zeylinger,
Study and others. Unfortunately, by the end of the nineteenth
century it was essentially forgotten to be then fully rediscovered
only in the second half of the twentieth century. It remains largely
unknown and keeps being rediscovered by various authors interested
in rigid body mechanics (including this author).



Unfortunately, screw theory is usually explained  following
descriptive definitions rather than short axiomatic lines of
reasoning.
As a result, the available introductions are somewhat unsatisfactory
to the modern mathematical and physical minded reader. Perhaps for
this reason, some authors among the few that are aware of the
existence of screw theory claim that it is too complicated to
deserve to be taught. For instance, the last edition of Goldstein's
textbook \cite{goldstein01} includes a footnote which, after
introducing the full version of Chasles' theorem (Sect. \ref{mvj}),
which might be regarded as the starting point of screw theory,
comments

\begin{quote}
[$\ldots$] there seems to be little present use for this version of
Chasles' theorem, nor for the elaborate mathematics of screw motions
elaborated at the end of the nineteenth century.
\end{quote}

Were it written in the fifties of the last century this claim could
have been shared, but further on screw theory has become a main tool
for robotics \cite{jazar10} where it is ordinarily used.
Furthermore, while elaborate the mathematics of screw theory
simplifies the development of mechanics. Admittedly, however, some
people could be dissatisfied with available treatments and so its
main advantages can be underestimated. We offer here a shorter
introduction which, hopefully, could convince these readers of
taking a route into screw theory.


Let us comment on some definitions of screw that can be found in the
literature, so as to justify our choices.

A first approach, that this author does not find  appealing,
introduces the screw by means of the concept of {\em motor}. This
formalism depends on the point of reduction, and one finds the added
difficulty of proving the independence of the various deductions
from the chosen reduction point. It hides the geometrical content of
the screw and makes proofs lengthier. Nevertheless, it must be said
that the motor approach could be convenient for reducing screw
calculations to a matter of algebra (the so called screw calculus
\cite{dimentberg68}).

In a similar vein,  some references, including  Selig's
\cite{selig05}, introduce the screw from a matrix formulation that
tacitly assumes that a choice of reference frame has been made (thus
losing the invariance at sight of the definition).

Still concerning the screw definition,  some literature  follows the
practical and traditional approach which introduces the screw from
its properties (screw axis, pitch, etc.)
\cite{ball76,hunt90,featherstone08}, like in old fashioned linear
algebra where one would have defined a vector from its direction,
verse and module, instead of defining it as an element of a vector
space (to complicate matters, some authors define the screw up to a
positive constant, in other words they work with a projective space
rather that a vector space).  This approach could be more intuitive
but might also give a false confidence of understanding, and it is
less suited for a formal development of the theory. It is clear that
the vector space approach in linear algebra, while less intuitive at
first, proves to be much more powerful than any descriptive
approach. Of course, one has to complement it with the descriptive
point of view in order to help the intuition. In my opinion the same
type of strategy should be followed in screw theory, with a maybe
more formal introduction, giving a solid basis, aided by examples to
help the intuition. Since descriptive intuitive approaches are not
lacking in the literature, this work aims at giving a short
introduction of more abstract and geometrical type.

It should be said that at places there is an excess of formality in
the available presentations of screw theory. I refer to the tendency
of giving separated definitions of screws, one for the kinematical
{\em twist} describing the velocity field of the rigid body, and the
other for the {\em wrench} describing the forces acting on the body.
This type of approach, requiring definitions for screws and their
dual elements  (sometimes called co-screws),  lengthens the
presentation and forces the introduction and use of the  dual space
of a vector space, a choice which is not so popular especially for
undergraduate teaching.

Who adopts this point of view argues that  it should also be adopted
for forces in mechanics, which should be treated as 1-forms instead
as vectors. This suggestion, inspired by the concept of conjugate
momenta of Lagrangian and Hamiltonian theory, sounds more modern,
but would be geometrically well founded only if one could develop
mechanics without any mention to the scalar product.
 The scalar product allows us to identify a vector space with its dual
and hence to work only with the former. If what really matters is
the pairing between a vector space and its dual then, as this makes
sense even without scalar product, we could dispense of it. It is
easy to realize that in order to develop mechanics we need a vector
space (and/or its dual) as well as  a scalar product and an
orientation (although most physical combinations of interest might
be rewritten so as a to get rid of it, e.g.\ the kinetic energy is
$T=\frac{1}{2} \bf{p}[\bf{v}]$).

Analogously, in screw theory,  it could seem more appealing to look
at kinematical twists as screws and to dynamical wrenches as
co-screws, but geometrically this choice does not seem compelling,
and in fact it is questionable, given the price to be paid in terms
of length and loss of unity of the presentation. Therefore, we are
going to use just one mathematical entity - the screw - emphasizing
the role of the screw scalar product in identifying screws and dual
elements.

%



In this work I took care at introducing screw theory in a way as far
as coordinate independent as possible, but avoiding the traditional
descriptive route. In this approach the relation with the Lie
algebra of rigid maps becomes particularly transparent. Finally,
most approaches postpone the definition of screw after the examples
of systems of applied forces from which the idea of screw can been
derived. I think that it is better to introduce the screw first and
then to look at the applications.

In this way, through some key choices, I have obtained a hopefully
clear and straightforward introduction to screw theory, which is at
the same time mathematically rigorous. My hope is that after reading
these notes, the reader will share the author's opinion that screw
theory is indeed ``the right'' way of teaching rigid body mechanics
as the tight relation with the Lie group of rigid maps suggests.


\section{Abstract screw theory}

In this section we define the {\em screw} and prove some fundamental
properties. Specific applications will appear in the next sections.

Let us denote with $E$ the affine Euclidean space modeled on the
three dimensional vector space $V$. The space $V$ is endowed with a
positive definite scalar product $\cdot: V\times V\to \mathbb{R}$,
and is given an orientation. This structure is represented with a
triple $(V,\cdot,o)$ where $o$ denotes the orientation. Note that
thanks to this structure a vector product $\times: V\times V\to V$
can be naturally defined on $V$. Points of $E$ are denoted with
capital letters e.g.\ $P,Q,\ldots$ while points in $V$ are denoted
as $\vec{a}, \vec{b}, \ldots$ We shall repeatedly use the fact that
the mixed product $\vec{a} \cdot (\vec{b}\times \vec{c})$ changes
sign under odd permutations of its terms and remains the same under
even permutations. A vector field is a map $ f: E \to V$.

An {\em applied vector} is an element of $E\times V$, namely a pair
$(Q,\vec{v})$ where $Q$ is the {\em application point} of
$(Q,\vec{v})$. A {\em sliding vector} is an equivalence class of
applied vectors, where two applied vectors $(Q,\vec{v})$  and
$(Q',\vec{v}')$  are equivalent if $\vec{v}=\vec{v'}$ and for some
$\lambda \in \mathbb{R}$, $Q'-Q=\lambda \vec{v}$, namely they have
the same line of action. We shall preferably use the concept of
applied vector even in those cases in which it could be equivalently
replaced by that of sliding vector. The reason is that the concept
of sliding vector is superfluous because it is more convenient to
regard applied and sliding vectors as special types of screws.


Occasionally, we shall use the concept of {\em reference frame}
which is defined by a  choice of {\em origin} $O\in E$, and of
positive oriented orthonormal base
$\{\vec{e}_1,\vec{e}_2,\vec{e}_3\}$  for $(V,\cdot,o)$. Once a
reference frame has been fixed, any point $P\in E$ is univocally
determined by its coordinates $x^i, i=1,2,3$, defined through the
equation $P=O+\sum_i x^i \vec{e}_i$.

\begin{remark}
In order to lighten the formalism we shall consider different
physical vector quantities, such as position, velocity, linear
momenta, force, mechanical momenta, as elements of the same vector
space $V$. A more rigorous treatment would introduce a different
vector space for each one of these concepts. The reader might
imagine to have fixed the dimension units. It is understood that,
say, a linear momenta cannot be summed to a force even though in our
treatment they appear to belong to the same vector space.
\end{remark}

\begin{definition}
A {\em screw} is a vector field $ s: E \to V$ which admits some $
\vec{s} \in V$ in such a way that for any two points $P,Q\in E$
\begin{equation} \label{nuv}
s(P)-s(Q)=\vec{s} \times (P-Q).
\end{equation}
\end{definition}
For any screw $s$ the vector $\vec{s}$ is unique, indeed if
$\vec{s}$ and  $\vec{s}\,'$ satisfy the above equation, then
subtracting the  corresponding equations
$(\vec{s}\,'-\vec{s})\times(P-Q)=0$ and from the arbitrariness of
$P$, $\vec{s}\,'=\vec{s}$. The vector $\vec{s}$ is called the {\em
resultant of the screw}. If the resultant of the screw vanishes then
$s(P)$ does not depend on $P$ and  the screw is said to be {\em
constant}. Equation (\ref{nuv}) is the {\em constitutive equation of
the screw}.

\begin{definition}
If $s$ is a screw the quantity $s(P)\cdot \vec{s}$ does not depend
on the point and in called the {\em  scalar invariant of the screw}.
The {\em vector invariant  of the screw} is the quantity
(independent of $P$) and defined by
\begin{align*}
\bm{\mathcal{s}}&=s(P), \qquad &if \ \vec{s}=\vec{0}, \\
\bm{\mathcal{s}}&=\frac{s(P)\cdot \vec{s}}{{\vec{s}\cdot\vec{s}}}
\,{\vec{s}},  \qquad &if  \ \vec{s}\ne \vec{0}.
\end{align*}
%
%
%
%
\end{definition}
 Thus  if $\vec{s}\ne \vec{0}$ the vector
invariant of the screw is the projection of $s(P)$ on the direction
given by the resultant, and it is actually independent of $P$.

\begin{proposition}
The screws form a vector space $S$ and the map which sends $s$ to
$\vec{s}$ is linear.
\end{proposition}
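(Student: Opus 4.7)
My plan is to first observe that screws are a particular kind of vector field $E\to V$, and the set of all such vector fields is already a vector space under pointwise addition and scalar multiplication. So the task reduces to showing that the subset $S$ consisting of screws is closed under these two operations, and contains the zero vector field (which is trivially a constant screw with resultant $\vec{0}$).

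For closure under addition, I would take two screws $s_1,s_2$ with resultants $\vec s_1,\vec s_2$ and verify that $s:=s_1+s_2$ satisfies the constitutive equation (\ref{nuv}) with resultant $\vec s_1+\vec s_2$. This is immediate from the bilinearity (distributivity) of $\times$:
\begin{align*}
s(P)-s(Q) &= \bigl(s_1(P)-s_1(Q)\bigr)+\bigl(s_2(P)-s_2(Q)\bigr)\\
&=\vec s_1\times(P-Q)+\vec s_2\times(P-Q)\\
&=(\vec s_1+\vec s_2)\times(P-Q).
\end{align*}
Likewise, for $\lambda\in\mathbb R$, the field $\lambda s_1$ satisfies $(\lambda s_1)(P)-(\lambda s_1)(Q)=\lambda\vec s_1\times(P-Q)=(\lambda\vec s_1)\times(P-Q)$, so $\lambda s_1$ is a screw with resultant $\lambda\vec s_1$. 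The remaining vector space axioms (associativity, commutativity of addition, distributivity, etc.) are inherited from $V$ applied pointwise, so $S$ is indeed a vector space.

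For the linearity of the map $\Phi:s\mapsto\vec s$, the computations above already show $\Phi(s_1+s_2)=\vec s_1+\vec s_2=\Phi(s_1)+\Phi(s_2)$ and $\Phi(\lambda s_1)=\lambda\vec s_1=\lambda\Phi(s_1)$, provided one knows the resultant identified this way really is \emph{the} resultant of $s_1+s_2$ and $\lambda s_1$. This is guaranteed by the uniqueness of the resultant, already proved in the paragraph immediately following the definition of screw.

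There is no genuine obstacle here; the proof is essentially a one-line verification leveraging the bilinearity of the cross product and the uniqueness of the resultant. The only conceptual point worth flagging is precisely this appeal to uniqueness, which ensures that the assignment $s\mapsto\vec s$ is well-defined and hence that the linearity statement makes sense.
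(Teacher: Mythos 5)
Your proposal is correct and follows essentially the same route as the paper: the paper verifies the constitutive equation for a general linear combination $\alpha s_1+\beta s_2$ in one step, which simultaneously gives closure and linearity of $s\mapsto\vec{s}$, exactly as your separate addition and scalar-multiplication checks do via bilinearity of the cross product. Your explicit appeal to the uniqueness of the resultant for well-definedness is a point the paper established in the paragraph preceding the proposition and uses implicitly.
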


\begin{proof}
If $s_1$ and $s_2$ are two screws
\begin{align*}
s_1(P)-s_1(Q)&=\vec{s}_1 \times (P-Q), \\
s_2(P)-s_2(Q)&=\vec{s}_2 \times (P-Q).
\end{align*}
Multiplying by $\alpha$ the first equation and adding the latter
multiplied by $\beta$ we get
\begin{equation} \label{hwx}
(\alpha s_1+\beta s_2)(P)-(\alpha s_1+\beta s_2)(Q)=(\alpha
\vec{s}_1+\beta \vec{s}_2) \times (P-Q),
\end{equation}
which implies that the screws form a vector space and that the
resultant of the screw $\alpha s_1+\beta s_2$ is $\alpha
\vec{s}_1+\beta \vec{s}_2$, that is, the map $s \to \vec{s}$ is
linear.
\end{proof}

Given two screws $s_1$ and $s_2$ let us consider the quantity
 \[\langle s_1, s_2\rangle(P):=\vec{s}_1 \cdot s_2(P) +
\vec{s}_2 \cdot s_1(P).\]

\begin{proposition}
For any two points $P,Q\in E$, $\langle s_1, s_2\rangle(P)=\langle
s_1, s_2\rangle(Q)$.
\end{proposition}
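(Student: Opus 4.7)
The plan is to show that the difference $\langle s_1,s_2\rangle(P) - \langle s_1,s_2\rangle(Q)$ vanishes identically, which immediately gives the claimed independence from the chosen point.

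First I would compute the difference directly from the definition:
\[
\langle s_1,s_2\rangle(P) - \langle s_1,s_2\rangle(Q) = \vec{s}_1\cdot[s_2(P)-s_2(Q)] + \vec{s}_2\cdot[s_1(P)-s_1(Q)].
\]
Next, I would substitute the constitutive equation (\ref{nuv}) applied separately to $s_1$ and $s_2$, turning the right-hand side into
\[
\vec{s}_1\cdot\bigl(\vec{s}_2\times(P-Q)\bigr) + \vec{s}_2\cdot\bigl(\vec{s}_1\times(P-Q)\bigr).
\]

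The only remaining step is to invoke the cyclic/antisymmetry property of the mixed product recalled in the preliminaries: an odd permutation of the three factors flips the sign, while an even one preserves it. Applied to each term separately, both equal $\pm(P-Q)\cdot(\vec{s}_1\times\vec{s}_2)$ with opposite signs, so they cancel. Hence the difference is zero and $\langle s_1,s_2\rangle$ is a constant function of its point argument.

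There is no real obstacle here; the statement is essentially an exercise in the symmetry of the scalar triple product. The only thing to be careful about is writing the antisymmetry step cleanly, since the symmetric combination $\vec{s}_1\cdot s_2(P)+\vec{s}_2\cdot s_1(P)$ was chosen precisely so that the two triple products cancel. If one instead took just $\vec{s}_1\cdot s_2(P)$ the result would fail, which is worth flagging as motivation for the symmetric definition (and ultimately for the forthcoming screw scalar product).
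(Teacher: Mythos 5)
Your proof is correct and follows essentially the same route as the paper: substitute the constitutive equation for each screw and cancel the two mixed products $\vec{s}_1\cdot(\vec{s}_2\times(P-Q))$ and $\vec{s}_2\cdot(\vec{s}_1\times(P-Q))$ using the antisymmetry of the triple product. The only cosmetic difference is that you compute the difference of the two values while the paper rewrites $\langle s_1,s_2\rangle(P)$ in terms of the values at $Q$; the cancellation step is identical.
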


\begin{proof}
By definition $s_1(P)-s_1(Q)=\vec{s}_1 \times (P-Q)$ and
$s_2(P)-s_2(Q)=\vec{s}_2 \times (P-Q)$, thus
\begin{align*}
\vec{s}_1 \cdot s_2(P) + &\vec{s}_2 \cdot s_1(P)=\vec{s}_1 \cdot
(s_2(Q)+\vec{s}_2 \times (P-Q))+ \vec{s}_2 \cdot (s_1(Q)+\vec{s}_1
\times (P-Q))\\
&=\vec{s}_1 \cdot s_2(Q)+ \vec{s}_2 \cdot s_1(Q)+\{\vec{s}_1
\cdot[\vec{s}_2 \times (P-Q)]+\vec{s}_2 \cdot[\vec{s}_1 \times
(P-Q)]\}\\
&=\vec{s}_1 \cdot s_2(Q)+ \vec{s}_2 \cdot s_1(Q).
\end{align*}
\end{proof}

According to the previous result we can simply write $\langle s_1,
s_2\rangle$ in place of $\langle s_1, s_2\rangle(P)$.

\begin{definition} The {\em screw scalar product} is
the symmetric bilinear map $\langle \cdot ,\cdot \rangle: S\times
S\to \mathbb{R}$ which sends $(s_1,s_2)$ to  $\langle s_1,
s_2\rangle$.
\end{definition}

Note that the  scalar invariant of a screw is one-half the screw
scalar product of the screw by itself. Since this scalar invariant
can be negative, the screw scalar product on $S$ is not positive
definite. Nevertheless, we shall see that it is non-degenerate
(Sect. \ref{gvo}).

The cartesian product $V \times V$ endowed with the usual sum and
product by scalar gives the direct sum $V\oplus V$. Typically, there
will be three ways to construct screws out of (applied) vectors. The
easy proofs to the next two propositions are left to the reader.

\begin{proposition}
The map $\alpha: V \to S$ given by $\vec{v}\to s(P):=\vec{v}$ sends
a (free) vector to a constant screw. The map $\beta: E\times V \to
S$ given by $(Q,\vec{w})\to s(P):=\vec{w}\times (P-Q)$ sends an
applied vector to a screw. The map $\gamma: E\times V\times V\to S$
given by  $((Q,\vec{w}),\vec{v}) \to s(P):=\vec{v}+\vec{w}\times
(P-Q)$ sends a pair given by an applied vector and a free vector to
a screw.
\end{proposition}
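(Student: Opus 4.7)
The plan is to verify, for each of the three maps, the constitutive equation \eqref{nuv} by a direct computation, exhibiting the resultant along the way. All three verifications are short; the only point requiring a little care is choosing the right candidate for $\vec{s}$ in each case.

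For $\alpha$, the image $s(P) = \vec{v}$ is by construction independent of $P$, so $s(P) - s(Q) = \vec{0} = \vec{0} \times (P-Q)$, showing that $\alpha(\vec{v})$ is a screw with resultant $\vec{0}$ (a constant screw, as expected).

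For $\beta$, I would compute directly: with $s(P) := \vec{w} \times (P-Q)$, we have
\[
s(P) - s(P') = \vec{w} \times (P-Q) - \vec{w} \times (P'-Q) = \vec{w} \times \bigl((P-Q) - (P'-Q)\bigr) = \vec{w} \times (P-P'),
\]
using bilinearity of the cross product and the affine identity $(P-Q)-(P'-Q) = P-P'$. Thus $\beta(Q,\vec{w})$ is a screw with resultant $\vec{w}$; note in particular that the application point $Q$ does not appear in the resultant.

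For $\gamma$, rather than redoing the computation I would invoke the vector space structure already established: $\gamma((Q,\vec{w}),\vec{v})$ is visibly the sum $\alpha(\vec{v}) + \beta(Q,\vec{w})$ of a constant screw and a $\beta$-type screw, hence is itself a screw by the preceding proposition, with resultant $\vec{0} + \vec{w} = \vec{w}$. There is no real obstacle here; the only mildly subtle point worth flagging to the reader is that in the $\beta$ construction the resultant is the free vector $\vec{w}$ and is insensitive to the choice of application point $Q$, which foreshadows why two different applied vectors sharing the same line of action (a sliding vector) produce screws with the same resultant but generally different fields.
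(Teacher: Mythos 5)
Your verification is correct and is exactly the direct check the paper intends (it explicitly leaves this proof to the reader as easy): each map is shown to satisfy the constitutive equation $s(P)-s(Q)=\vec{s}\times(P-Q)$ with resultants $\vec{0}$, $\vec{w}$, and $\vec{w}$ respectively, and using linearity of $S$ for $\gamma$ is a clean way to avoid repeating the computation. Nothing is missing.
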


The screws in the image of $\alpha$ will be called constant or {\em
free screws}. The screws in the image of $\beta$ will be called {\em
applied screws}. Clearly, by the constitutive equation of the screw,
the map $\gamma$ is surjective. In particular, every screw is the
sum of a free screw and an applied screw.

\begin{proposition}
Let $\gamma_O=\gamma(O,\cdot,\cdot): V\oplus V \to S$, then this
linear map is bijective.
%
\end{proposition}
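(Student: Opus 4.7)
The plan is to unpack the definition of $\gamma_O$ and check linearity, injectivity, and surjectivity directly, in each case reducing to the constitutive equation of the screw.

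First I would note that, by construction, $\gamma_O(\vec{w},\vec{v})$ is the screw $s$ defined by $s(P)=\vec{v}+\vec{w}\times(P-O)$, whose resultant is $\vec{w}$ and whose value at $O$ is $\vec{v}$. Linearity of $\gamma_O$ is immediate from the bilinearity of the cross product and the linearity of the vector-space operations on $V\oplus V$, so the real content is bijectivity.

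For injectivity I would suppose $\gamma_O(\vec{w},\vec{v})=0$, i.e.\ $\vec{v}+\vec{w}\times(P-O)=\vec{0}$ for every $P\in E$. Setting $P=O$ gives $\vec{v}=\vec{0}$, and then $\vec{w}\times(P-O)=\vec{0}$ for every $P$; since $P-O$ ranges over all of $V$, this forces $\vec{w}=\vec{0}$. Hence $\ker\gamma_O=\{0\}$.

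For surjectivity I would start with an arbitrary screw $s\in S$ and its (unique) resultant $\vec{s}$, and then set $\vec{w}:=\vec{s}$ and $\vec{v}:=s(O)$. The constitutive equation~\eqref{nuv} applied with $Q=O$ yields, for every $P\in E$,
\[
s(P)=s(O)+\vec{s}\times(P-O)=\vec{v}+\vec{w}\times(P-O)=\gamma_O(\vec{w},\vec{v})(P),
\]
so $s=\gamma_O(\vec{w},\vec{v})$ and $\gamma_O$ is onto. There is no genuine obstacle here; the only subtlety is recognising that the constitutive equation is exactly what is needed to reconstruct a screw from its resultant and its value at the single point $O$, which is precisely the statement that $\gamma_O$ is a linear isomorphism $V\oplus V\xrightarrow{\sim} S$.
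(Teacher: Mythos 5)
Your proof is correct and is precisely the direct verification the paper has in mind (it explicitly leaves this proposition as an easy exercise for the reader): linearity from bilinearity of the cross product, injectivity by evaluating at $P=O$, and surjectivity from the constitutive equation with $Q=O$. No gaps.
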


Its inverse $\gamma_O^{-1}:S\to V\oplus V$ is called {\em motor
reduction} at $O$.
%
Once we agree on the reduction point $O$, any pair $(\vec{s}, s(O)
)$ as in the previous proposition is  called a {\em motor} at $O$.
Sometimes we shall write $\vec{s}^{O}$ for $s(O)$, thus the motor at
$O$ reads $(\vec{s}, \vec{s}^{O})$. Often, for reasons that will be
soon clear, we will prefer to represent the ordered pair in a column
form of two elements of $V$.

We can write the found bijective correspondence between $S$ and
$V^2$ as follows
\[
s \in S \xleftrightarrow{origin \ O} \begin{pmatrix} \vec{s}\\
\vec{s}^{O}\end{pmatrix} \in V\oplus V.
\]
In this representation the screw scalar product is given by $\langle
s_1, s_2\rangle = \vec{s}_1 \cdot \vec{s}_2^{\,O} + \vec{s}_2 \cdot
s_1^{O}$, thus is is mediated by the matrix $\begin{pmatrix} 0 & I
\\ I & 0 \end{pmatrix}$ where $I: V\to V$ is the identity map.

Let us now recall that any point $O\in E$ can be used as {\em
origin}, namely it allows us to establish a bijective correspondence
between $E$ and $V$ given by $P\to P-O$. If we additionally
introduce a positive oriented orthonormal base then we further have
the linear isomorphism $V \xleftrightarrow{base} \mathbb{R}^3$,
thus, as a result, given a full reference frame the screw gets
represented by an element of $\mathbb{R}^6$ in which the first three
components are those of $\vec{s}$ while the last three components
are those of $\vec{s}^{O}$.


\begin{definition}
Given a screw $s\in S$, the {\em screw axis} of $s$ is the set of
points for which the screw field has minimum module.
\end{definition}

\begin{proposition}
The screw axis coincides with the set $E$ if $\vec{s}=\vec{0}$ and
with a line of direction $\vec{s}$ if $\vec{s}\ne \vec{0}$. In both
cases, if $Q$ is any point in the screw axis then
\begin{equation} \label{njf}
s(P)=\bm{\mathcal{s}}+\vec{s}\times (P-Q).
\end{equation}
As a consequence, the screw axis is the set of points for which the
screw field coincides with $\bm{\mathcal{s}}$. For any point $Q$ on
the axis the motor reduction at $Q$ is $\vec{s} \oplus
\bm{\mathcal{s}}$.
\end{proposition}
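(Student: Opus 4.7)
The plan is to split into cases by whether $\vec{s}=\vec{0}$ and, in the generic case, to exploit the orthogonal decomposition of $s(P)$ along $\vec{s}$ to locate the points where $|s(P)|$ is minimal.

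In the degenerate case $\vec{s}=\vec{0}$ the constitutive equation makes $s$ constant, so $s(P)$ equals its vector invariant $\bm{\mathcal{s}}$ at every point, the module is trivially constant, the axis is all of $E$, and the stated identity reduces to $s(P)=\bm{\mathcal{s}}$.

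Assume now $\vec{s}\neq\vec{0}$. I would decompose $s(P)=\bm{\mathcal{s}}+s_\perp(P)$, where $\bm{\mathcal{s}}$ is the (point-independent) component along $\vec{s}$ given by the definition of the vector invariant and $s_\perp(P)\perp\vec{s}$. Pythagoras then yields
\[
|s(P)|^2=|\bm{\mathcal{s}}|^2+|s_\perp(P)|^2,
\]
so $|s(P)|$ is minimized exactly on the level set $A=\{P\in E:s(P)=\bm{\mathcal{s}}\}$. This already proves the characterization of the axis as the locus on which the field takes the value $\bm{\mathcal{s}}$, and reduces everything to showing that $A$ is a line parallel to $\vec{s}$.

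For non-emptiness of $A$, I would fix any $O\in E$, write $s(O)=\bm{\mathcal{s}}+\vec{u}$ with $\vec{u}\perp\vec{s}$, and verify that the explicit choice $Q=O+(\vec{s}\times\vec{u})/(\vec{s}\cdot\vec{s})$ satisfies $\vec{s}\times(Q-O)=-\vec{u}$ via the double cross product identity; the constitutive equation then gives $s(Q)=\bm{\mathcal{s}}$, so $Q\in A$. The shape of $A$ follows again from the constitutive equation: if $Q_1,Q_2\in A$ then $\vec{s}\times(Q_1-Q_2)=\vec{0}$, forcing $Q_1-Q_2$ to be parallel to $\vec{s}$; conversely, $Q+\lambda\vec{s}\in A$ whenever $Q\in A$, since $\vec{s}\times(\lambda\vec{s})=\vec{0}$. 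Equation (\ref{njf}) is then immediate from the constitutive equation together with $s(Q)=\bm{\mathcal{s}}$, and the motor reduction at any axis point $Q$ equals $(\vec{s},s(Q))=(\vec{s},\bm{\mathcal{s}})$ by definition. The only step that is not purely formal is the use of the identity $\vec{s}\times(\vec{s}\times\vec{u})=(\vec{s}\cdot\vec{u})\vec{s}-(\vec{s}\cdot\vec{s})\vec{u}$ to exhibit an explicit point on the axis; I expect this to be the main, albeit very mild, obstacle.
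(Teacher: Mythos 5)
Your proof is correct and follows essentially the same route as the paper: the explicit point you construct, $Q=O+(\vec{s}\times\vec{u})/(\vec{s}\cdot\vec{s})$ with $\vec{u}=s(O)-\bm{\mathcal{s}}$, coincides with the paper's choice $Q=A+\vec{s}\times s(A)/(\vec{s}\cdot\vec{s})$ since $\vec{s}\times\bm{\mathcal{s}}=\vec{0}$, and the paper likewise identifies the axis as the level set $\{P: s(P)=\bm{\mathcal{s}}\}$ and deduces minimality from $\bm{\mathcal{s}}\propto\vec{s}$. Your version merely makes the Pythagorean decomposition and the non-emptiness check more explicit than the paper's ``it is easy to check.''
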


%
%

Let us observe that the former term in the right-hand  side of Eq.
(\ref{njf}) is proportional to $\vec{s}$ and independent of the
point, while the latter term is orthogonal to to $\vec{s}$ and
dependent on the point.

\begin{proof}
Let us suppose $\vec{s}\ne \vec{0}$, the other case being trivial.
Let $A$ be any point, then it is easy to check that the axis which
passes through $Q$ in direction $\vec{s}$ where
\begin{equation} \label{nxc}
Q=A+\frac{\vec{s}\times s(A)}{\vec{s}\cdot \vec{s}},
\end{equation}
is made of points $R$ for which $s(R)=\bm{\mathcal{s}}$. Using the
constitutive equation of screws we find that Eq. (\ref{njf}) holds.
If $P$ is another point for which $s(P)=\bm{\mathcal{s}}$ then that
same equation gives $\vec{s}\times (P-Q)=\vec{0}$, which implies
that $P$ stays in the axis. Thus the found axis is the locus of
points $P$ for which $s(P)=\bm{\mathcal{s}}$. Equation (\ref{njf})
and the fact that $\bm{\mathcal{s}}\propto \vec{s}$ imply that this
axis is made of points for which the screw field is minimal. The
other claims follow easily.
\end{proof}

\begin{remark}
Usually the vector invariant and the screw axis are defined only for
$\vec{s}\ne \vec{0}$. However, we observe that it is convenient to
extend the definition as done here in such a way that Eq.
(\ref{njf}) holds for any screw. The case $\vec{s}=\vec{0}$ is
admittedly special and can be called {\em degenerate}.
\end{remark}

\begin{remark} \label{nhs}
The composition of applied vectors is nothing but the addition of
the corresponding screws in the vector space $S$. The resultant
screw can then be represented with its motor in  the screw axis
which is given by the resultant $\vec{s}$ aligned with the axis and
the invariant vector $\bm{\mathcal{s}}$ having the same direction
(Fig. \ref{fug}). In this sense the composition of applied vectors
does not give an applied vector. The operation of composition is
closed only if the full space of screws is considered.
\end{remark}

\begin{figure}[ht]
\begin{center}
 \includegraphics[width=9cm]{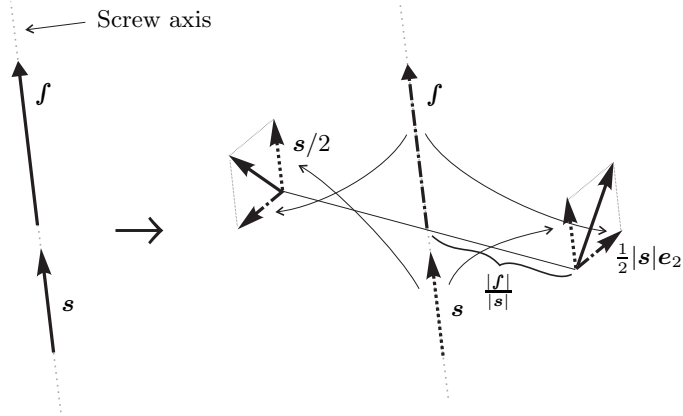}
\end{center}
\caption{Reduction of a screw to the simplest system of applied
screws (case $\vec{s}\ne\vec{0}$).} \label{fug}
\end{figure}

\begin{remark}
Two systems of applied vectors are said to be {\em equivalent} if
they determine the same screw. One often looks for the simplest way
of representing a screw by applied vectors. This is accomplished as
follows. The screw is the sum of the free screw given by
$\bm{\mathcal{s}}$ and the applied screw determined by the resultant
$\vec{s}$ applied on the screw axis. The free screw is generated by
two opposite applied vectors $\vec{s}_2$, $-\vec{s}_2$, placed in a
plane perpendicular to $\bm{\mathcal{s}}$ and such that their
magnitude times their arm gives ${\mathcal{s}}$. This reduces any
screw to the sum of at most three applied screw (two if
$\vec{s}=\vec{0}$). If $\vec{s}\ne \vec{0}$ the number can be
reduced to two regarding $\vec{s}$ as the sum
$\frac{1}{2}\vec{s}+\frac{1}{2}\vec{s}$, and absorbing one term of
type $\frac{1}{2}\vec{s}$ through a redefinition of  $\vec{s}_2$,
and analogously for the other  (see Fig.\ \ref{fug}). The arm can be
chosen in such a way that the resultants of the two applied screws
are perpendicular. In summary any screw is generated by two applied
screws whose resultants are either opposite with screw axes
belonging to the same plane (if $\vec{s}=\vec{0}$), or equal in
magnitude and perpendicular (if $\vec{s}\ne \vec{0}$).

\end{remark}
%
%

\begin{definition}
The {\em pitch} $p\in \mathbb{R}$ of a screw $s$, with $\vec{s}\ne
0$, is that constant such that $\bm{\mathcal{s}}=\frac{p}{2\pi}
\vec{s}$. If $\vec{s}=\vec{0}$ and $\bm{\mathcal{s}} \ne \vec{0}$,
we set by definition $p=+\infty$.
\end{definition}

Clearly, for a non-trivial screw, the pitch vanishes if and only if
the screw is an applied screw, and the pitch equals $+\infty$ if and
only if the screw is a free screw. The screws with a given pitch do
not form a vector subspace.

\begin{remark}
Using the pitch the screw can be rewritten
\[
s(P)=a [\frac{p}{2\pi}\, \vec{e}+\vec{e}\times (P-Q)],
\]
where $\vec{s}=a \vec{e}$, with $\vec{e}$ normalized vector and $a
\ge 0$. The quantity $a$ is called {\em amplitude of the screw}. It
must be said that for Sir R. S. Ball  \cite{ball76}  the screw is
$s/a$. However, it is not particularly convenient to regard $s/a$ as
a fundamental object since these type of normalized screws do not
form a vector space. Sir R. S. Ball would refer to our screws as
{\em screw motions}. We prefer to use our  shorter terminology
(shared by \cite{selig05}) because, for a dynamical screw $d$, which
we shall later introduce, no actual motion needs to take place. Note
also that the normalization of the pitch is chosen in such a way
that, integrating the screw vector field by a parameter $2\pi$,
i.e.\ by making a full rotation, one gets a diffeomorphism which is
a translation by $p$ along the screw axis. In other words, with the
chosen normalization, the pitch gives the translation of the screw
for any full rotation.
\end{remark}

\subsection{The commutator}

Every screw is a vector field, thus we can form the Lie bracket
$[s_1,s_2]$ of two screws \cite{kobayashi63}. In this section we
check that this commutator is itself a screw and calculate its
resultant.

\begin{proposition} \label{mjc}
The Lie bracket  $s=[s_1,s_2]$ is a screw with resultant $\vec{s}=-\vec{s}_1\times \vec{s_2}$ and satisfies
\begin{equation} \label{jfp}
s(P)=\vec{s}_2 \times s_1(P)-\vec{s}_1 \times s_2(P).
\end{equation}.
\end{proposition}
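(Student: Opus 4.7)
The plan is to use the coordinate-free formula for the Lie bracket of two vector fields on the affine space $E$, namely
\[
[s_1,s_2](P)=\bigl(s_1(P)\cdot\nabla\bigr)s_2-\bigl(s_2(P)\cdot\nabla\bigr)s_1,
\]
and to combine it with the fact that the directional derivative of a screw field is very simple. More precisely, I would first observe that the constitutive equation (\ref{nuv}) says that, once a reference point $O$ is fixed, $s(P)=s(O)+\vec{s}\times (P-O)$ is affine in $P$, so its directional derivative along any $\vec{v}\in V$ is
\[
(\vec{v}\cdot\nabla)s=\vec{s}\times\vec{v}.
\]
Plugging this identity in with $\vec{v}=s_1(P)$ and $\vec{v}=s_2(P)$ immediately yields formula (\ref{jfp}).

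Next I would check that the field $s$ defined by (\ref{jfp}) satisfies the constitutive equation of a screw with resultant $-\vec{s}_1\times\vec{s}_2$. For this, taking any two points $P,Q\in E$, I would subtract and use (\ref{nuv}) for $s_1$ and $s_2$ to obtain
\[
s(P)-s(Q)=\vec{s}_2\times\bigl(\vec{s}_1\times(P-Q)\bigr)-\vec{s}_1\times\bigl(\vec{s}_2\times(P-Q)\bigr).
\]
Then the standard double cross product identity $\vec{a}\times(\vec{b}\times\vec{c})=\vec{b}(\vec{a}\cdot\vec{c})-\vec{c}(\vec{a}\cdot\vec{b})$ applied to each term cancels the $(P-Q)(\vec{s}_1\cdot\vec{s}_2)$ contributions and leaves
\[
s(P)-s(Q)=\vec{s}_1\bigl(\vec{s}_2\cdot(P-Q)\bigr)-\vec{s}_2\bigl(\vec{s}_1\cdot(P-Q)\bigr)=(\vec{s}_2\times\vec{s}_1)\times(P-Q),
\]
using once more the same identity in reverse. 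This is exactly the constitutive equation with resultant $\vec{s}=-\vec{s}_1\times\vec{s}_2$, and by the uniqueness of the resultant established right after the definition of screw, we conclude.

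The only nontrivial step is the double application of the Jacobi/BAC--CAB identity, which is pure algebra; there is no real obstacle. An equivalent, maybe more conceptual, route would be to invoke the Jacobi identity directly: interpreting each screw as an element of the Lie algebra of the group of rigid motions (where $\vec{s}\times\cdot$ is the adjoint action of an infinitesimal rotation), the bracket of two infinitesimal rigid motions is again an infinitesimal rigid motion, and its rotational part is the commutator $[\vec{s}_1,\vec{s}_2]_{\mathfrak{so}(3)}=-\vec{s}_1\times\vec{s}_2$. I would nonetheless prefer the direct computation above for the sake of a self-contained proof.
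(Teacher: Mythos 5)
Your proof is correct and takes essentially the same route as the paper's: both arguments reduce the Lie bracket to the observation that the directional derivative of a screw field along $\vec{v}$ equals $\vec{s}\times\vec{v}$ (the paper phrases this in Cartesian components via a limit, you state it coordinate-free), and both then verify the constitutive equation with the same double-cross-product computation to identify the resultant $-\vec{s}_1\times\vec{s}_2$.
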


\begin{remark}
Some authors define the commutator of two screws as minus the Lie
bracket.
\end{remark}

\begin{proof}
Let $s_1$ and $s_2$ be two screws
\begin{align}
s_1(P)-s_1(Q)&=\vec{s}_1 \times (P-Q), \\
s_2(P)-s_2(Q)&=\vec{s}_2 \times (P-Q). \label{kgu}
\end{align}
Let us fix a cartesian coordinate system $\{x^i\}$, then the Lie bracket reads
\[
s^i=s_1^j\p_js_2^i-s_2^j\p_j s_1^i.
\]
Note that $s_1^j\p_js_2^i(P)=\lim_{\epsilon\to 0}\frac{1}{\epsilon}[s_2^i(P+\epsilon s_{1}(P))-s_2^i(P)]$ which, using Eq. (\ref{kgu}) becomes  $s_1^j\p_js_2^i(P)=[\vec{s}_2 \times s_1(P)]^i$. Inverting the roles of $s_1$ and $s_2$ we calculate the second term, thus we obtain the interesting expression
\[
s(P)=\vec{s}_2 \times s_1(P)-\vec{s}_1 \times s_2(P).
\]
Let us check that it is a screw, indeed
\begin{align*}
s(P)-s(Q)&=\vec{s}_2 \times s_1(P)-\vec{s}_1 \times s_2(P)-\vec{s}_2 \times s_1(Q)+\vec{s}_1 \times s_2(Q)\\
&=\vec{s}_2\times [s_1(P)-s_1(Q)]-\vec{s}_1\times [s_2(P)-s_2(Q)] \\
&=\vec{s}_2\times[\vec{s}_1\times (P-Q)]-\vec{s}_1\times[\vec{s}_2\times (P-Q)]\\
&=[\vec{s}_2\cdot(P-Q))]\vec{s}_1-[\vec{s}_1\cdot(P-Q))]\vec{s}_2=(-\vec{s}_1\times \vec{s}_2)\times (P-Q),
\end{align*}
which proves also that the resultant is as claimed.
\end{proof}

The relation between the commutator and the scalar product is clarified by the following result

\begin{proposition} \label{mns}
Let $s_1,s_2,s_3$, be three screws, then
\begin{equation}
\langle s_1, [s_3,s_2]\rangle +\langle [s_3,s_1], s_2\rangle=0.
\end{equation}
Furthermore, the quantity $\langle s_1, [s_3,s_2]\rangle $ reads
\[
\langle s_1, [s_3,s_2]\rangle =s_3(P)\cdot (\vec{s}_1\times
\vec{s}_2)+s_2(P)\cdot(\vec{s}_3\times
\vec{s}_1)+s_1(P)\cdot(\vec{s}_2\times \vec{s}_3),
\]
is independent of $P$, and does not change under cyclic permutations
of its terms.
\end{proposition}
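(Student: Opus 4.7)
The plan is a direct calculation followed by an exploitation of the cyclic symmetry that the calculation reveals. I would first prove the explicit formula in the middle of the statement, then derive the skew-symmetry identity as a corollary, and finally note that the cyclic invariance in $P$-independence come essentially for free.

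First I would expand $\langle s_1,[s_3,s_2]\rangle$ by inserting into the defining formula $\langle a,b\rangle=\vec{a}\cdot b(P)+\vec{b}\cdot a(P)$ the data obtained for $[s_3,s_2]$ in Proposition \ref{mjc}, namely $[s_3,s_2](P)=\vec{s}_2\times s_3(P)-\vec{s}_3\times s_2(P)$ and resultant $-\vec{s}_3\times\vec{s}_2=\vec{s}_2\times\vec{s}_3$. This gives
\begin{equation*}
\langle s_1,[s_3,s_2]\rangle=\vec{s}_1\cdot\bigl(\vec{s}_2\times s_3(P)\bigr)-\vec{s}_1\cdot\bigl(\vec{s}_3\times s_2(P)\bigr)+(\vec{s}_2\times\vec{s}_3)\cdot s_1(P).
\end{equation*}
Applying to each of the first two mixed products the invariance of $\vec{a}\cdot(\vec{b}\times\vec{c})$ under cyclic permutation (and the sign change under a transposition), the three terms turn respectively into $s_3(P)\cdot(\vec{s}_1\times\vec{s}_2)$, $s_2(P)\cdot(\vec{s}_3\times\vec{s}_1)$ and $s_1(P)\cdot(\vec{s}_2\times\vec{s}_3)$, which is exactly the advertised expression.

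The remaining claims are then immediate. Independence of $P$ is already known from the $P$-independence of $\langle\cdot,\cdot\rangle$, so no separate computation is required. Denoting the symmetric expression by $T(s_1,s_2,s_3)$, one sees at a glance that cycling $1\to 2\to 3\to 1$ merely reorders the three summands, so $T$ is invariant under cyclic permutations of its arguments.

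Finally, for the identity $\langle s_1,[s_3,s_2]\rangle+\langle [s_3,s_1],s_2\rangle=0$, I would combine the cyclic symmetry just established with the symmetry of the screw scalar product and the antisymmetry of the Lie bracket: $\langle s_1,[s_3,s_2]\rangle=T(s_1,s_2,s_3)=T(s_2,s_3,s_1)=\langle s_2,[s_1,s_3]\rangle=-\langle [s_3,s_1],s_2\rangle$. No step is really an obstacle here; the only point requiring care is keeping the signs straight when rewriting the triple products, which is why I would do that rearrangement first and let everything else follow from the manifest cyclic symmetry of the resulting formula.
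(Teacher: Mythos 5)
Your proof is correct and follows essentially the same route as the paper: both insert the expression and resultant of $[s_3,s_2]$ from Proposition \ref{mjc} into the definition of the screw scalar product and rearrange the mixed products into the manifestly cyclic three-term formula. The only cosmetic difference is the last step, where the paper reads off the identity from the antisymmetry of that formula under the exchange $s_1\leftrightarrow s_2$, while you reach the same conclusion via its cyclic invariance combined with the antisymmetry of the bracket and the symmetry of $\langle\cdot,\cdot\rangle$.
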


\begin{proof}
We use Eq. (\ref{jfp})
\begin{align*}
\langle s_1, [s_3,s_2]\rangle &=\vec{s}_1\cdot[\vec{s}_2\times s_3(P)-\vec{s}_3\times s_2(P)]+s_1(P)\cdot(-\vec{s}_3\times \vec{s}_2) \\
&=s_3(P)\cdot (\vec{s}_1\times \vec{s}_2)+s_2(P)\cdot(\vec{s}_3\times \vec{s}_1)+s_1(P)\cdot(\vec{s}_2\times \vec{s}_3).
\end{align*}
This expression changes sign under exchange of $s_1$ and $s_2$, thus we obtain the desired conclusion.

\end{proof}

\subsection{The dual space and the reference frame reduction to $\mathbb{R}^6$} \label{gvo}
Given a screw $s\in S$ it is possible to construct the linear map
$\langle s, \cdot \rangle: S\to \mathbb{R}$ which is an element of
the dual space $S^*$.

\begin{proposition}
The linear map $\langle s, \cdot \rangle$ sends every screw to zero
(namely, it is the null map), if and only if $s=0$.
\end{proposition}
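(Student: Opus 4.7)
The forward direction ($s=0 \Rightarrow \langle s, \cdot\rangle \equiv 0$) is immediate from the definition of the screw scalar product, so the content of the statement is the converse: non-degeneracy of $\langle \cdot, \cdot \rangle$.

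My plan is to exploit the motor reduction $\gamma_O^{-1}: S \to V\oplus V$ established earlier, which identifies a screw $s$ with the pair $(\vec{s}, s(O))$ once an origin $O \in E$ is fixed, together with the matrix form of the pairing already recorded in the text, namely
\[
\langle s_1, s_2\rangle \;=\; \vec{s}_1 \cdot s_2(O) \,+\, \vec{s}_2 \cdot s_1(O).
\]
Assuming $\langle s, s'\rangle = 0$ for every $s'\in S$, I would test this identity against two particularly simple families of screws obtained via the maps $\alpha$ and $\beta$ of the earlier proposition. First, taking $s' = \alpha(\vec{b})$ (a free screw, with resultant $\vec{0}$ and constant value $\vec{b}$) reduces the equation to $\vec{s}\cdot \vec{b} = 0$; letting $\vec{b}$ range over $V$ forces $\vec{s} = \vec{0}$. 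Second, taking $s' = \beta(O, \vec{a})$ (an applied screw based at $O$, whose motor at $O$ is $(\vec{a}, \vec{0})$) reduces the identity to $\vec{a}\cdot s(O) = 0$; letting $\vec{a}$ range over $V$ forces $s(O) = \vec{0}$.

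Once $\vec{s} = \vec{0}$ and $s(O) = \vec{0}$ are established, the constitutive equation $s(P) = s(O) + \vec{s}\times(P-O)$ yields $s(P) = \vec{0}$ for all $P$, that is $s=0$, concluding the proof. I do not expect any genuine obstacle here: the only subtlety is ensuring the two test families are legitimately covered by screws whose existence was already guaranteed by the surjectivity of $\gamma$ (equivalently, by the concrete maps $\alpha$ and $\beta$), so that the argument does not rely on the reduction point $O$ in any essential way beyond providing coordinates on $S \cong V\oplus V$.
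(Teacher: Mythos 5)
Your proof is correct and follows essentially the same route as the paper's: both arguments test $\langle s,\cdot\rangle$ first against free screws to force $\vec{s}=\vec{0}$ and then against applied screws to force the remaining constant value to vanish. The only cosmetic difference is that the paper picks the single witnesses $\vec{b}=\vec{s}$ and $\vec{a}=\bm{\mathcal{s}}$ to get $\vec{s}^{\,2}=0$ and $\bm{\mathcal{s}}^{2}=0$ directly, whereas you let the test vector range over all of $V$ and invoke nondegeneracy of the Euclidean scalar product.
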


\begin{proof}
If $s$ is such that $\vec{s}\ne 0$, then the scalar product with the
free screw $s'(P):=\vec{s}$, shows that $0=\langle s, s'
\rangle=\vec{s}^2$, a contradiction.

If $s$ is a constant screw with vector invariant $\bm{\mathcal{s}}$,
then the screw scalar product with the applied screw
$s'(P):=\bm{\mathcal{s}}\times(P-Q)$, where $Q$ is some point, gives
$0=\langle s, s' \rangle=\vec{\mathcal{{s}}}^2$, hence
$\vec{\mathcal{{s}}}=\vec{0}$ and thus $s$ is the null screw.
%
%
\end{proof}

We have shown that the linear map $s\to \langle s, \cdot \rangle$ is
injective. We wish to show that $s\to \langle s, \cdot \rangle$ is
surjective, namely any element of the dual vector space $S^{*}$, can
be regarded as the scalar product with some screw. We could deduce
this fact using the injectivity and the equal finite dimensionality
of $S$ and $S^{*}$, but we shall proceed in a more detailed way
which will allow us to introduce a useful basis for the space of
screws and its dual.

Let us choose $Q\in E$, and let  $\{\vec{e}_1,\vec{e}_2,\vec{e}_3\}$
be a positive oriented orthonormal base for $(V,\cdot,o)$, where $o$
denotes the orientation. Namely, assume that we have made a choice
of reference frame. The six screws, $f_i=((Q, \vec{e}_i),\vec{0})$,
$m_i=((Q, \vec{0}),\vec{e}_i)$, $i=1,2,3$ generate the whole space
$S$. Indeed, if $s$ is a screw and $((Q,\vec{s}),s(Q))$ is its motor
at $Q$, $\vec{s}=a_1 \vec{e}_1+a_2 \vec{e}_2+a_3 \vec{e}_3$,
$s(Q)=b_1 \vec{e}_1+b_2 \vec{e}_2+b_3 \vec{e}_3$ ,
 then $((Q,\vec{s}),s(Q))=\sum_{i=1}^3 [a_i f_i+b_i m_i]$. As a
consequence, every reference frame establishes a bijection between
the screw space $S$ and $\mathbb{R}^6$ as follows
\[
s\in S \xleftrightarrow[]{ \ reference \ frame \ } \begin{pmatrix}  \bar{a} \\
\bar{b} \end{pmatrix} \in \mathbb{R}^6
\]
where $\bar{a},\bar{b}\in \mathbb{R}^3$ (vectors in $\mathbb{R}^3$
are denoted with a bar, while the boldface notation is reserved for
vectors in $V$).

The screw scalar product between $s,s'\in S$ in this representation
takes the form
\begin{equation}
\langle s,s'\rangle= \bar{a} \cdot \bar{b}'+\bar{b}\cdot \bar{a}',
\end{equation}
thus the screw scalar product quadratic form is given by the $6
\times 6$ matrix
\begin{equation}
\begin{pmatrix} 0 & I \\I & 0\end{pmatrix} ,
\end{equation}
where $I$ is the identity $3 \times 3$ matrix.

 Let us now consider the six linear functionals  $\langle m_i,
\cdot \rangle$, $\langle f_i, \cdot \rangle$, $i=1,2,3$. From the
definition of scalar product evaluated at $Q$ it is immediate that
\begin{align*}
\langle m_i, \cdot\rangle (f_j)&=\langle m_i, f_j
\rangle=\delta_{ij}, \\
\langle f_i, \cdot\rangle (m_j)&=\langle f_i, m_j \rangle=\delta_{ij}, \\
\langle m_i, \cdot \rangle (m_j)&=\langle m_i, m_j \rangle=0, \\
\langle f_i,\cdot \rangle (f_j)&=\langle f_i, f_j \rangle=0. \\
\end{align*}
that is $\{ \langle m_i, \cdot \rangle, \langle f_i, \cdot \rangle;
i=1,2,3\}$ is the dual base to $\{f_i, m_i; i=1,2,3\}$.

 Every element $z\in S^{*}$ is uniquely determined by the
values $c_i, d_i$, $i=1,2,3$, that it takes on the six base screws
$f_i,m_i$, $i=1,2,3$. By the above formulas, the linear  combination
\[
\sum_{i=1}^3 [c_i\langle m_i, \cdot \rangle +  d_i \langle f_i,
\cdot \rangle]=\langle \sum_{i=1}^3 [c_i m_i+d_i f_i], \cdot \rangle
\]
takes the same values on the screw base and thus coincides with $z$.
We can therefore establish a bijection of the dual space $S^*$ with
$\mathbb{R}^6$ as follows
\[
z\in S^* \xleftrightarrow[]{ \ reference \  frame \ } \begin{pmatrix}  \bar{c} \\
\bar{d} \end{pmatrix} \in \mathbb{R}^6
\]
where $\bar{c},\bar{d}\in \mathbb{R}^3$ (it is convenient to
distinguish this copy of $\mathbb{R}^6$ with that isomorphic with
$S$ introduced above).

As a consequence

\begin{proposition}
The linear map $s \to \langle s, \cdot \rangle$, from $S$ to $S^*$
is bijective.
\end{proposition}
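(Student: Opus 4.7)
The plan is direct since the preceding discussion has done most of the work. Injectivity of $s \mapsto \langle s, \cdot \rangle$ was already established as the previous proposition: any screw that pairs to zero against every other screw must itself vanish. So only surjectivity needs a separate argument, and for this I would simply read the basis computation immediately above the statement as providing an explicit preimage.

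Spelled out in the order I would present it: fix a reference frame so that $\{f_1,f_2,f_3,m_1,m_2,m_3\}$ is a basis of $S$ with pairing table $\langle m_i, f_j\rangle = \delta_{ij}$ and $\langle f_i, f_j\rangle = \langle m_i, m_j\rangle = 0$. Given $z \in S^{*}$, set $c_i := z(f_i)$, $d_i := z(m_i)$, and define
\[
\tilde s := \sum_{i=1}^{3} \bigl[c_i\, m_i + d_i\, f_i\bigr] \in S.
\]
Bilinearity of the screw scalar product together with the pairing table yields $\langle \tilde s, f_j\rangle = c_j = z(f_j)$ and $\langle \tilde s, m_j\rangle = d_j = z(m_j)$, so the linear functionals $\langle \tilde s, \cdot\rangle$ and $z$ agree on a basis of $S$ and therefore coincide. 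This establishes surjectivity, completing the proof.

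There is no genuine obstacle here; the only observation worth flagging is that the reference-frame identification $S \cong \mathbb{R}^{6}$ tells us $\dim S = 6 = \dim S^{*}$, so in fact injectivity alone already forces bijectivity and the explicit construction could be skipped. I would keep the explicit version nevertheless, because it gives a concrete recipe for the inverse isomorphism and makes manifest the claim, promised earlier in Section 2, that the screw scalar product is non-degenerate---a point essential for the philosophy advertised in the introduction, namely that screws and co-screws can be safely identified and one need not carry around the dual space.
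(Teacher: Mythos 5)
Your proposal is correct and follows essentially the same route as the paper: injectivity is taken from the preceding proposition, and surjectivity is obtained from the explicit basis construction $\sum_i [c_i m_i + d_i f_i]$, exactly as in the discussion the paper places before the statement (the paper even makes your same remark that the dimension count would suffice but the explicit construction is kept for its usefulness). No gaps.
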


Thanks to this result any screw can be regarded either as an element
of $S$ or, acting with the screw scalar product,  as an element of
$S^*$. It must be stressed that if $s \in S$ is represented by
$\begin{pmatrix} \bar{a} \\ \bar{b} \end{pmatrix}$
then $\langle s, \cdot \rangle \in S^*$ is represented by
$\begin{pmatrix} \bar{b} \\ \bar{a} \end{pmatrix}$, that is, the map
from $S$ to $S^{*}$ which sends $s$ to $\langle s, \cdot \rangle$ is
given in this representation by the matrix $\begin{pmatrix} 0 & I
\\I & 0\end{pmatrix}$. The pairing between the elements of $S^{*}$
and those of $S$ is the usual one on $\mathbb{R}^6$. Nevertheless,
it is useful to keep in mind that we are actually in presence of two
copies of $\mathbb{R}^6$ (as we consider two isomorphisms), the
former isomorphic with $S$ and the latter isomorphic with $S^*$.

\begin{remark}
All this reduction to $\mathbb{R}^6$ depends on the reference frame.
As mentioned in the introduction most references of screw theory
introduce the screw starting from its reduction or using a
descriptive approach (the screw has an axis, a pitch, etc.). As we
argued in the introduction, it is pedagogically and logically
preferable to define the screw without making reference to  any
reference frame.
\end{remark}

For future reference we calculate, using Prop. \ref{mjc}, the
commutator between the screw base elements
\begin{align*}
[m_i,m_j]&=0,\\
[f_i,m_j]&=-[m_j,f_i]=-\sum_k \epsilon_{ijk} m_k, \\
[f_i,f_j]&=-\sum_k \epsilon_{ijk} f_k.
\end{align*}
The reader will recognize the Lie algebra commutation relations of
the group $SE(3)$ of rigid maps. We shall return to this non
accidental fact later on.

 Given a screw $s$ we consider the map $ad_s: S\to S$ which acts
as $s'\to ad_s s':=[s,s']$. Clearly, $ad_{s}s'=-ad_{s'} s$ and if
$x,y,z$ are screws, the Jacobi identity for the Lie bracket of
vector fields $[x,[y,z]]+[z,[x,y]]+[y,[z,x]]=0$,  becomes
\begin{equation} \label{bjz}
ad_{ad_x y}=ad_x ad_y-ad_y ad_x.
\end{equation}

Let an origin $O$ be given and let us use the isomorphism with
$V\oplus V$. Let $s$ be represented by  $\begin{pmatrix} \vec{s} \\
\vec{s}^{O} \end{pmatrix}$. If we introduce a full reference frame
it is possible to check with a little algebra that, according to the
above commutations, the map $ad_s$ is represented by the matrix
\begin{equation} \label{lmu}
ad_s \xleftrightarrow{origin \ O} \begin{pmatrix} -\vec{s}\times  & 0   \\
 -\vec{s}^{O} \times  &  -\vec{s} \times \end{pmatrix}
\end{equation}
where for every $\vec{v}\in V$, $\vec{v}\times:V\to V$ is an
endomorphism of $V$ induced by the vector product. Of course, if we
had kept the  reference frame $\mathbb{R}^6$ isomorphism, then, as
it is customary, with $\bar{v} \times$ we would mean  the $3 \times
3$ matrix ${\footnotesize \begin{pmatrix}
 0& -v_3 & v_2 \\ v_3 & 0 & -v_1 \\ -v_2 & v_1& 0 \end{pmatrix}}$.

\section{The kinematical screw and the composition of rigid motions}
\label{njc}

A {\em rigid motion} is a continuous map $\varphi: [0,1]\times  E
\to E$, which preserves the distances between points, i.e.\ for
every $P,Q\in E$, $t\in [0,1]$, we have
$\vert\varphi(t,P)-\varphi(t,Q)\vert=\vert P-Q\vert$, and such that
$\varphi(0,\cdot): E\to E$ is the identity map. A {\em rigid map} is
the result of a rigid motion, that is a map of type
$\varphi(1,\cdot): E\to E$. It can be shown that every rigid map is
an affine map  which preserves the scalar product and  is
orientation preserving \cite[App. 6]{maclane99}. The rigid maps form
a group usually denoted $SE(3)$.

In kinematics the velocity field of bodies performing a rigid motion
satisfies the {\em fundamental formula of the rigid body}
\begin{equation} \label{kih}
\vec{v}(P)-\vec{v}(Q)=\vec{\omega} \times (P-Q).
\end{equation}
This formula is usually deduced from Poisson formula for the time
derivative of a normalized vector: $\frac{\dd \vec{e}'}{\dd
t}=\vec{\omega} \times \vec{e}'$.

Equation (\ref{kih}) defines a screw which is  called {\em twist} in
the literature. Let us denote this screw with $k$, then
$k(P)=\vec{v}(P)$ and $\vec{k}=\vec{\omega}$, where $\vec\omega$ is
the angular velocity of the rigid body. The instantaneous axis of
rotation is by definition the screw axis of $k$.

Let us recall that if a point  moves with respect to a frame $K'$
which is in motion with respect to a frame $K$, then the velocity of
the point with respect to $K$ is obtained by summing the drag
velocity of the point, as if it were rigidly connected with frame
$K'$, with the velocity relative to $K'$. If two kinematical screws
are given and summed then the result gives a velocity field  which
represents (by interpreting one of the screw field as the velocity
field of the points at rest in $K'$ with respect to $K$) the
composition of two rigid motions. The nice fact is that the result
is independent of which screw is regarded as describing the motion
of $K'$. In other words the result has an interpretation in which
the role of the screws can be interchanged.

More generally, one may have a certain number of frames $K^{(i)}$,
$i=0,1,\ldots, n$, of which we know the screw $k_{i+1}$ which
describes the rigid motion of $K^{(i+1)}$ with respect to $K^{(i)}$.
The motion of $K^{(n)}$ with respect to $K^{(0)}$ is then described
by the screw $\sum_{i=1}^n k_{i}$. In particular, since the map
which sends a screw to its  resultant  is linear, the angular
velocity of $K^{(n)}$ with respect to $K^{(0)}$ is the sum of the
angular velocities: $\sum_{i=1}^n \vec{\omega}_{i}$. As illustrated
in the introduction, the screw approach tells us something more.
Indeed, one can establish the direction of the instantaneous axis of
rotation of $K^{(n)}$ with respect to $K^{(0)}$ by using the same
methods used to determine the central axis in a problem of applied
forces. Indeed, we shall see in a moment that there is a parallelism
between forces and angular velocities as they are both resultants of
some screw.

\section{Dynamical examples of screws}

In dynamics the most important screw is that given by the moment
field, and is called {\em wrench}. Let us recall that the momentum
$\vec{M}(Q)$ of a set of applied forces $(P_i,\vec{F}_i)$ with
respect to a point $Q$ is given by
\begin{equation} \label{cga}
\vec{M}(Q)=\sum_i (P_i-Q)\times \vec{F}_i.
\end{equation}
If we consider $P$ in place of $Q$ we get
\begin{equation} \label{cgb}
\vec{M}(P)=\sum_i (P_i-P)\times \vec{F}_i=\sum_i (P_i-Q+Q-P)\times
\vec{F}_i=\vec{M}(Q)+\vec{F}\times (P-Q),
\end{equation}
where $\vec{F}=\sum_i\vec{F}_i$ is the force resultant. This
equation shows that we are in presence of a screw $d$ such that
$d(P)=\vec{M}(P)$, $\vec{d}=\vec{F}$. The central axis of a system
of forces is nothing but the screw axis.

Another example of screw is given by the angular momentum field. The
angular momentum $\vec{L}(Q)$ of a system of point particles located
at $R_i$ with momentum $\vec{p}_i$ with respect to a point $Q$ is
given by
\begin{equation} \label{bas}
\vec{L}(Q)=\sum_i (R_i-Q)\times \vec{p}_i.
\end{equation}
If we consider $B$ in place of $Q$ we get
\[
\vec{L}(B)=\sum_i (R_i-B)\times \vec{p}_i=\sum_i (R_i-Q+Q-B)\times
\vec{p}_i=\vec{L}(Q)+\vec{P}\times (B-Q),
\]
where $\vec{P}=\sum_i\vec{p}_i$ is the total linear momentum. This
equation shows that we are in presence of a screw $l$ such that
$l(Q)=\vec{L}(Q)$, $\vec{l}=\vec{P}$.

\subsection{The cardinal equations of mechanics}
Let us consider the constitutive equation of the screw of angular
momentum
\[
\vec{L}(B)-\vec{L}(Q)=\vec{P}\times (B-Q).
\]
The vector $\vec{L}(B)$ changes in time as the distribution of
velocity and mass changes. Actually, we can consider here another
source of time change if we allow the point $B$ to change in time.
Let us first consider the case in which the angular momentum is
considered with respect to a fixed point.

By differentiating the previous equation with respect to time we get
equation (\ref{cgb}). In other words the dynamic screw $d$ is the
time derivative of the dynamic screw $l$
\begin{equation} \label{vgt}
\frac{\p l}{\p t}= d.
\end{equation}
We use here a partial derivative to remind us that the poles are
fixed.

This equation replaces the first and second cardinal equation of
mechanics. Indeed, as the map $l \to \vec{l}$ is linear it follows
\begin{equation}
\frac{\p \vec{l}}{\p t}= \vec{d},
\end{equation}
which is the first cardinal equation $\dd \vec{P}/\dd t= \vec{F}$ in
disguise. (Alternatively, write $l_t(P)=l_t(Q)+ \vec{l}_t \times
(P-Q)$ and differentiate). Here the partial derivative coincides
with the total derivative because the resultant is a free vector, it
does not depend on the point. The second cardinal equation with
respect to a point $O$
\[
\frac{\p \vec{L}(O)}{\p t}=\vec{M}(O),
\]
is obtained by evaluating Eq. (\ref{vgt}) at the  point $O$.

\subsubsection{The cardinal equation in a rigidly moving non-inertial
frame}

In Eq. (\ref{vgt}) we have differentiated with respect to time
assuming that the point with respect to which we evaluate the
angular momentum does not change in time. In other words we have
adopted a Eulerian point of view. Suppose now that on space we have
a vector field $\vec{v}(P)$ which describes the motion of a
continuum (not necessarily a rigid body). In this case we have to
distinguish the Eulerian derivative with respect to time, which we
have denoted $\p /\p t$, from the Lagrangian or total derivative
with respect to time $\dd /\dd t$. According to the latter, the
second cardinal equation of mechanics reads
\begin{equation} \label{nxg}
\frac{\dd \vec{L}(O)}{\dd t}=-\vec{v}(O)\times \vec{P}+\vec{M}(O),
\end{equation}
where $O(t)$ is the moving pole. Let us differentiate
\[
\vec{L}(B)-\vec{L}(Q)=\vec{P}\times (B-Q),
\]
with respect to time using the Lagrangian description, that is,
assuming that $B$ and $Q$ move respectively with velocities
$\vec{v}(B)$, $\vec{v}(Q)$, and considering the angular momenta with
respect to the moving points. We obtain
\begin{equation} \label{bkx}
\frac{\dd \vec{L}(B)}{\dd t}-\frac{\dd \vec{L}(Q)}{\dd
t}=\vec{F}\times (B-Q)+\vec{P}\times (\vec{v}(B)-\vec{v}(Q)).
\end{equation}
Using the second cardinal equation (\ref{nxg}) we find that this is
the constitutive equation of the momentum screw. Nevertheless, the
total derivative of the angular momentum is not a screw.

The relation between the partial and total derivative is as follows
\[
\frac{\dd \vec{L}(B)}{\dd t}=\frac{\p \vec{L}(B)}{\p
t}+\nabla_{\vec{v}(B)} \vec{L},
\]
where
\[
\nabla_{\vec{v}(B)} \vec{L}=\lim_{\epsilon\to
0}\frac{1}{\epsilon}[\vec{L}(B+\vec{v}(B)\epsilon)-\vec{L}(B)]=
\lim_{\epsilon\to 0}\frac{1}{\epsilon}[\vec{P}\times
\vec{v}(B)\epsilon ]=\vec{P}\times \vec{v}(B),
\]
thus
\begin{equation} \label{kff}
\frac{\dd \vec{L}(B)}{\dd t}=\frac{\p \vec{L}(B)}{\p
t}+\vec{P}\times \vec{v}(B).
\end{equation}
However, suppose that the velocity field is itself a screw,
\[
\vec{v}(P)-\vec{v}(R)=\vec{\omega}\times (P-R),
\]
so that the continuum moves rigidly, then from Eq. (\ref{bkx}),
using the previous results for commutators
\begin{align*}
[\frac{\dd \vec{L}(B)}{\dd t}-\,&\vec{\omega}\times
\vec{L}(B)]-[\frac{\dd \vec{L}(Q)}{\dd t}-\vec{\omega}\times
\vec{L}(Q)]\\
&=\vec{F}\times (B-Q)+[\vec{P}\times \vec{v}(B)-\vec{\omega}\times
\vec{L}(B)]-[\vec{P}\times \vec{v}(Q)-\vec{\omega}\times
\vec{L}(Q)]\\
&=\vec{F}\times (B-Q)+ [k,l](B)-[k,l](Q)=\vec{F}\times
(B-Q)+(-\vec\omega\times \vec{P})\times (B-Q)\\
&= [\vec{F}-\vec\omega\times \vec{P}]\times (B-Q).
\end{align*}
The time derivative $\frac{\dd }{\dd t})_R$ with respect to the
moving frame reads by Poisson formula, $\frac{\dd }{\dd
t})_R=\frac{\dd }{\dd t}-\,\vec{\omega}\times$, thus the previous
result can be summarized  as follows

\begin{theorem}
Let us denote with $\dd /\dd t$ the total derivative with respect to
points that move rigidly according to a kinematical screw $k$ of
vector field $\vec{v}(Q)$, and with $\frac{\dd }{\dd t})_R$ the time
derivative relative to the corresponding rigidly moving frame.
 The quantity \[\frac{\dd }{\dd t} \vec{L}(Q) \,)_R=\frac{\dd \vec{L}(Q)}{\dd
t}-\vec{\omega}\times \vec{L}(Q),\] defines a screw with resultant
$\frac{\dd }{\dd t} \vec{P} \,)_R=\vec{F}-\vec\omega\times \vec{P}$.
This screw coincides with the screw
\[
\vec{M}(Q)+[k,l](Q),
\]
thus
\[
\frac{\dd l}{\dd t})_R=\frac{\partial l}{\partial t}+[k,l].
\]

\end{theorem}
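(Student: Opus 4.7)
The plan is to package the algebraic block immediately preceding the statement into three short steps. First, set $\tilde{L}(B):=\frac{\dd \vec{L}(B)}{\dd t}-\vec{\omega}\times \vec{L}(B)$ and read off the last chain of equalities before the theorem as
\[
\tilde{L}(B)-\tilde{L}(Q)=[\vec{F}-\vec{\omega}\times \vec{P}]\times (B-Q).
\]
This is precisely the constitutive equation (\ref{nuv}), so $\tilde{L}$ is a screw with resultant $\vec{F}-\vec{\omega}\times \vec{P}$. To identify this resultant with $\frac{\dd}{\dd t}\vec{P}\,)_R$, I would simply apply Poisson's rule $\frac{\dd}{\dd t}\,)_R=\frac{\dd}{\dd t}-\vec{\omega}\times$ to the free vector $\vec{P}$, using $\dd\vec{P}/\dd t=\vec{F}$.

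Second, to show that $\tilde{L}$ coincides with the screw $\vec{M}+[k,l]$, I would invoke that two screws with equal resultants that agree at one point agree everywhere; thus it suffices to check the identity at a chosen pole $Q$. Using (\ref{kff}) with $B=Q$ and $\p \vec{L}/\p t=\vec{M}$ from (\ref{vgt}) gives
\[
\tilde{L}(Q)=\vec{M}(Q)+\vec{P}\times \vec{v}(Q)-\vec{\omega}\times \vec{L}(Q).
\]
On the other hand, Prop.~\ref{mjc} applied to $s_1=k$, $s_2=l$ yields
\[
[k,l](Q)=\vec{l}\times k(Q)-\vec{k}\times l(Q)=\vec{P}\times \vec{v}(Q)-\vec{\omega}\times \vec{L}(Q),
\]
so the two expressions match, confirming the coincidence of screws.

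Finally, the identity $\frac{\dd l}{\dd t}\,)_R=\frac{\p l}{\p t}+[k,l]$ is then just the field-level restatement of what has been established, using $\frac{\p l}{\p t}=d$ with $d(Q)=\vec{M}(Q)$ from (\ref{vgt}). I expect no conceptual obstacle here; the main risk is purely clerical, namely keeping straight the sign convention of Prop.~\ref{mjc} (which assigns $[k,l]$ the resultant $-\vec{k}\times\vec{l}$) against the $\pm\vec{\omega}\times$ terms arising from the Poisson-type formulas, since a single sign flip would spoil both the constitutive check and the pointwise coincidence at $Q$.
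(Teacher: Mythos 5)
Your proof is correct and follows essentially the same route as the paper: the constitutive-equation computation displayed just before the theorem, Eqs.~(\ref{kff}) and (\ref{vgt}), and the pointwise commutator formula of Prop.~\ref{mjc} are exactly the ingredients the paper relies on. Your explicit check at the pole $Q$ simply spells out what the paper's one-line proof asserts to follow ``easily from Eq.~(\ref{kff}) and the definition of commutator.''
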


\begin{proof}
We have only to prove the last statement, which follows easily from
Eq. (\ref{kff}) and the definition of commutator.
\end{proof}

It must be remarked that in the previous result the angular momentum
$\vec{L}$ is calculated as in the original inertial frame, and not
using the point particle velocities as given in the moving frame of
twist $k$. Nevertheless, the previous result is quite interesting as
it gives a dynamical application of the commutator.

\subsection{The inertia map}
Given a rigid body the kinematical screw $k$ fixes the velocity of
every point of the rigid body and hence determines  the angular
momentum screw $l$. The map $k\to l$ is linear and is an extension
of the momentum of inertia map which includes the translational
inertia provided by  the mass.

Let us recall that given some continuum and fixed a point $Q$, the
{\em momentum of inertia map} $I: V \to V$, $\vec{\eta} \to
I_Q(\vec{\eta})$, is the linear map defined by the expression
\[
I_Q(\vec{\eta})=\sum_i m_i (R_i-Q) \times [\vec\eta\times (R_i-Q)]
\]
where we have discretized the continuum into point masses $m_i$
located, respectively, at positions $R_i$. Let $C$ be the center of
mass, namely the point defined by $\sum_i m_i (R_i-Q)=M(C-Q)$ where
$M=\sum_i m_i$. It is easy to prove the Huygens-Steiner formula
\[
I_Q(\vec{\eta})=I_C(\vec{\eta})+M (C-Q) \times [\vec\eta\times
(C-Q)].
\]

Let us consider a rigid motion described by a kinematical screw.
From Eq. (\ref{bas})
\begin{equation}
\vec{L}(Q)=\sum_i m_i (R_i-Q)\times \vec{v}(R_i),
\end{equation}
where $\vec{v}(R)$ is the kinematical screw. It is clear that this
map sends a screw into what has been proved to be another screw, and
that this map depends on the location of the masses. Let us
differentiate with respect to time the equation $\sum_i m_i
(R_i-Q)=M(C-Q)$ and use $\vec{v}(R_i)=\vec{v}(O)+\vec\omega\times
(R_i-O)$ to get
\[
\dot{C}=\frac{1}{M}\sum_i m_i \vec{v}_i=\frac{1}{M}\sum_i m_i
[\vec{v}(O)+\vec\omega\times (R_i-O)]=\vec{v}(O)+ \vec\omega\times
(C-O)=\vec{v}(C).
\]
From this equation we also obtain $\vec{P}=M\vec{v}(C)$. We have by
writing $\vec{v}(R_i)=\vec{v}(C)+\vec\omega\times (R_i-C)$
\begin{align*}
\vec{L}(Q)&=\sum_i m_i (R_i-Q)\times \vec{v}(R_i)\\
&=(C-Q)\times (M\vec{v}(C))+ \sum_i m_i (R_i-C) \times
[\vec\omega\times (R_i-C)]\\
&= (M\vec{v}(C))\times (Q-C)+ I_C(\vec\omega).
\end{align*}
This equation shows how the kinematical screw determines the
dynamical screw $l$. Plugging $Q=C$ we find
$\vec{L}(C)=I_C(\vec\omega)$, and we reobtain, as we already know,
$\vec{l}=\vec{P}=M\vec{v}(C)$.

According to Eq. (\ref{nxc}) the screw axis of the angular momentum
screw passes through the point
\[
Q=C+\frac{\vec{v}(C)\times I_C(\vec\omega)}{M v(C)^2}
\]
with direction $\vec{v}(C)$ while the instantaneous axis of rotation
passes through the point
\[
O=C+\frac{\vec{\omega}\times \vec{v}(C)}{\omega^2}
\]
and has direction $\vec{\omega}$.

%
%
%

\subsection{Screw scalar product examples: Kinetic energy, power and reciprocal screws}

Let us consider a rigid body and let us decompose it into point
particles of mass $m_i$ located at $R_i$ with mass velocity
$\vec{v}_i$ and momentum $\vec{p}_i=m_i \vec{v}_i$.

Let  $Q$ be any point
\begin{align*}
T=&\sum_i\frac{1}{2}m_i\vec{v}_i^2=\frac{1}{2}
\sum_i\vec{p}_i(\vec{v}(Q)+\vec{\omega}\times
(R_i-Q))=\frac{1}{2}\vec{P}\cdot\vec{v}(Q)+\frac{1}{2}\sum_i
\vec{p}_i\cdot(\vec{\omega}\times (R_i-Q))\\
&=\frac{1}{2}\vec{P}\cdot\vec{v}(Q)+\frac{1}{2}\sum_i
\vec{\omega}\cdot( (R_i-Q) \times
\vec{p}_i)=\frac{1}{2}[\vec{v}(Q)\cdot \vec{P}+\vec{\omega}\cdot
\vec{L}(Q)\,]=\frac{1}{2} \langle k, l \rangle.
\end{align*}
Thus the kinetic energy is one half the screw scalar product between
the kinetic screw and the angular momentum screw.

Let us now suppose that on each point  particle of mass $m_i$
located at $R_i$ acts a force $\vec{F}_i$ possibly null.
%
The power of the applied forces is the sum of the powers of
the single forces. Denoting with $L$ the work done by the forces
\begin{align*}
\frac{\dd L}{\dd t}&=\sum_i \vec{F}_i\cdot \vec{v}_i =\sum_i \vec{F}_i\cdot [\vec{v}(O)+\vec{\omega}\times (R_i-O)]\\
&=\vec{F}\cdot \vec{v}(O)+ \sum_i  \vec{F}_i\cdot
[\vec{\omega}\times (R_i-O)] =\vec{F}\cdot \vec{v}(O)+ \sum_i
\vec{\omega} \cdot [(R_i-O)\times
 \vec{F}_i]\\&
 = \vec{F}\cdot \vec{v}(O)+ \vec{\omega} \cdot \vec{M}(O)=\langle k,
 d \rangle ,
\end{align*}
that is, the total power is the screw scalar product between the
kinematical screw and the dynamical screw. Since the scalar product
is independent of  $O$, let us choose $O=C$
\begin{align*}
\frac{\dd L}{\dd t}&=\vec{F}\cdot \vec{P}/M+\vec{\omega} \cdot
\frac{\dd I_C(\vec\omega)}{\dd t}+\vec{\omega} \cdot
(\vec{v}(C)\times \vec{P})=\frac{\dd }{\dd t}
\left(\frac{\vec{P}^2}{2M}\right)+ \vec{\omega} \cdot\frac{\dd
I_C(\vec\omega)}{\dd t}\\
&=\frac{\dd }{\dd t} \left(\frac{\vec{P}^2}{2M} +\frac{1}{2}
\vec{\omega} \cdot I_C(\vec\omega) \right)+ \frac{1}{2}\vec{\omega}
\cdot\frac{\dd I_C}{\dd t} (\vec\omega)= \frac{\dd T}{\dd t}+
\frac{1}{2}\vec{\omega} \cdot\frac{\dd I_C}{\dd t} (\vec\omega),
\end{align*}
where we used K\"onig's decomposition theorem. From the kinetic
energy theorem we know that the variation of kinetic energy equals
the work done by the forces on the rigid body, thus we expect that
the last term vanishes. This is indeed so thanks to the following
lemma.

\begin{lemma}
For a rigid body $\vec{\omega}\cdot \frac{\dd I_C}{\dd t} (\vec\omega)=0$.
\end{lemma}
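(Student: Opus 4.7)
The plan is to reduce $\vec\omega\cdot\frac{\dd I_C}{\dd t}(\vec\omega)$ to the time derivative of an explicit quadratic expression in which $\vec\omega$ can be held frozen, and then to use the fact that the vectors $\rho_i := R_i-C$ evolve rigidly. Using the BAC--CAB identity $\vec a\times(\vec b\times\vec c)=(\vec a\cdot\vec c)\vec b-(\vec a\cdot\vec b)\vec c$ on the definition of $I_C$, one obtains
\[
\vec\omega\cdot I_C(\vec\omega)=\sum_i m_i\bigl[\,|\rho_i|^{2}|\vec\omega|^{2}-(\rho_i\cdot\vec\omega)^{2}\bigr].
\]
Since $\frac{\dd I_C}{\dd t}(\vec\omega)$ denotes the derivative of the linear map $I_C$ evaluated at the fixed argument $\vec\omega$, we have the identity $\vec\omega\cdot\frac{\dd I_C}{\dd t}(\vec\omega)=\frac{\dd}{\dd t}\bigl[\vec\omega\cdot I_C(\vec\omega)\bigr]$ computed with $\vec\omega$ treated as constant; only the $\rho_i$'s contribute to the derivative.

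Then I would invoke the fundamental formula of the rigid body applied to $R_i$ and $C$, which gives $\dot\rho_i=\dot R_i-\dot C=\vec\omega\times\rho_i$. Differentiating the previous expression with $\vec\omega$ held fixed yields
\[
\vec\omega\cdot\frac{\dd I_C}{\dd t}(\vec\omega)=\sum_i m_i\bigl[\,2(\rho_i\cdot\dot\rho_i)|\vec\omega|^{2}-2(\rho_i\cdot\vec\omega)(\dot\rho_i\cdot\vec\omega)\bigr].
\]
Both $\rho_i\cdot\dot\rho_i=\rho_i\cdot(\vec\omega\times\rho_i)$ and $\dot\rho_i\cdot\vec\omega=(\vec\omega\times\rho_i)\cdot\vec\omega$ are mixed products with a repeated entry, hence zero. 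Each summand therefore vanishes and the lemma follows.

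The only real obstacle is conceptual rather than computational: one must be precise about what $\frac{\dd I_C}{\dd t}(\vec\omega)$ means, namely the time derivative of the (time--varying, because the body moves) linear operator $I_C$ applied to the argument $\vec\omega$, with $\vec\omega$ itself frozen during the differentiation. Once that convention is fixed, the argument collapses to two invocations of the antisymmetry of the mixed product. An alternative, coordinate--free way to present the same fact is to observe that $I_C$ is constant when viewed in a frame rigidly attached to the body, so in the inertial frame its derivative is the commutator $[\,\vec\omega\times,\,I_C\,]$; evaluated at $\vec\omega$ this produces $\vec\omega\times I_C(\vec\omega)$, which is automatically orthogonal to $\vec\omega$.
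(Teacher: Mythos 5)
Your proof is correct, but it takes a genuinely different route from the paper's. The paper never touches the point-mass definition of $I_C$: it diagonalizes the inertia operator as $O^T(t)DO(t)$ with $D$ the constant matrix of principal moments, differentiates to get $\frac{\dd I_C}{\dd t}=-AI_C+I_CA$ with $A=O^T\frac{\dd O}{\dd t}$ antisymmetric, and concludes from $A\vec\omega=\vec 0$ (since $A$ is, up to sign, the operator $\vec\omega\times$) together with the antisymmetry of $A$. You instead work directly from the definition $I_C(\vec\eta)=\sum_i m_i\,\rho_i\times(\vec\eta\times\rho_i)$, expand the quadratic form via the BAC--CAB identity, and kill each summand of the derivative using the rigidity relation $\dot\rho_i=\vec\omega\times\rho_i$ and the vanishing of mixed products with a repeated entry. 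Your version is more elementary (no diagonalization, no appeal to the existence of principal axes) and makes explicit where rigidity enters, at the cost of being tied to the discrete mass distribution; the paper's version is shorter once one accepts the spectral representation and generalizes immediately to any symmetric operator that is constant in the body frame. Your closing remark --- that $\frac{\dd I_C}{\dd t}=[\,\vec\omega\times,\,I_C\,]$ because $I_C$ is constant in the body frame, so $\frac{\dd I_C}{\dd t}(\vec\omega)=\vec\omega\times I_C(\vec\omega)\perp\vec\omega$ --- is in fact the paper's argument in coordinate-free disguise, since $-AI_C+I_CA$ is exactly that commutator. You are also right that the only delicate point is the meaning of $\frac{\dd I_C}{\dd t}(\vec\omega)$ as the derivative of the operator applied to a frozen argument; that reading is forced by the way the paper extracts $\frac{\dd}{\dd t}\bigl(\tfrac12\vec\omega\cdot I_C(\vec\omega)\bigr)$ in the power computation preceding the lemma, so your convention matches the intended one.
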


\begin{proof}
Let us fix a base for the vector space $V$ so that $I_C$ becomes
represented by a time dependent matrix $O^T(t)D O(t)$ where $D$ is
the diagonal  (time independent) matrix of the principal moments of
inertia and $O(t)$ is a time dependent matrix giving the rotation of
the principal directions of inertia with respect to a chosen fixed
base of $V$. Differentiating with respect to time we obtain
$\frac{\dd I_C}{\dd t} = -A I_C+I_C A$ where $A=O^T \frac{\dd O}{\dd
t}$ is an antisymmetric matrix. However, $\vec\omega$ belongs to the
kernel of this matrix from which the desired result follows.

\end{proof}

\begin{remark}
The screw is particularly useful when modeling workless constraints
between rigid bodies (think for example of a robotic arm and at its
constituent rigid parts). Indeed, suppose that  the body is made of
$N$ rigid parts and let us focus on part $i$. The constraints will
reduce the possible motions of part $i$ for a given position of the
other parts. In particular, the possible kinematical status of part
$i$ for any given relative configuration of all parts will be
described by a screw subspace $W\subset S$ of all the possible
twists of part $i$. Let $k\in W$ and let $d$ be the wrench acting on
rigid body $i$ as a result of the interaction with the neighboring
bodies. Since, by assumption, the constraints are workless we must
have $\langle d, k\rangle=0$. We conclude that the vector subspace
$Z\subset S$ made of the screws  which are screw-orthogonal  to the
allowed movements (i.e.\ screw-orthogonal to $W$), is made by all
the possible wrenches acting on body $i$ so as to make no work. Two
screws with vanishing screw scalar product (i.e.\ screw-orthogonal)
are said to be reciprocal and the subspace $Z$ is said to be
reciprocal to subspace $W$.

\end{remark}

\section{Lie algebra interpretation and Chasles' theorem} \label{mvj}

Let $g: E \to E$ be a rigid map, namely the result of a rigid motion
as it has been defined in  section \ref{njc}. We shall omit the
proof that $g$ is an affine map such that
$g(P+\vec{a})=g(P)+l(\vec{a})$ where $l:V\to V$ is a linear map
which preserves the scalar product and the orientation. The rigid
maps form a group denoted $SE(3)$ (which sometimes we shall simply
denote $G$). In the coordinates induced by a reference frame
$P'=g(P)$ has coordinates ${x^i}'$ which are related to those of $P$
by
\[
{x^i}'=\sum_j O^i_{\ j} x^j+b^i,
\]
where $O$ is a special orthogonal $3\times 3$ matrix. This
expression clarifies that $SE(3)$ is a Lie group. The three Euler
angles determining $O$ and the three translation coefficients
$\{b^i\}$, can provide a coordinate system on the 6-dimensional
group manifold. We stress that we regard $SE(3)$ as an abstract Lie
group, and we do not make any privileged choice of coordinates on it
(we do not want to make considerations that depend on the choice of
reference coordinates). The Lie algebra $\mathfrak{se}(3)$ (which
sometimes we shall simply denote  $\mathscr{G}$) of $SE(3)$ is the
family of left invariant vector fields on $SE(3)$. The Lie
commutator of two Lie algebra elements is still an element of the
Lie algebra. This structure can be identified with the tangent space
$TG_e$, $e$ being the identity element on $G$, endowed with the Lie
bracket $[,]:TG_e\times TG_e \to TG_e$.

Let $G\times E\to E$ be the above (left) group action on $E$ so that
$g_2(g_1(P))=(g_2g_1)(P)$. Each point $P\in E$ induces an {\em orbit
map} $u_P: G\to E$ given by $u_P(g)= g(P)$, thus $u_{P*}: TG_g\to
TE_{g(P)}$. We are interested on $u_{P*}$ at  $g=e$, so that
$g(P)=P$. If $v\in TG_e$ then $s(P):=u_{P*}(v)$, gives, for every
$P\in E$, a vector field on $E$ which is the image of the Lie
algebra element $v$. Such vector fields on $E$ are called {\em
fundamental vector fields}.

Let us consider the exponential $g(t)=\exp (t v)$ which is obtained
by the integration of the vector field $v$ from $e$. The orbit
$g(t)(P)$ passing through $P$ is obtained from the integration of
the vector field $s$ starting from $P$. In other words, the
1-parameter group of rigid maps $g(t)$ coincides with the
1-parameter group of diffeomorphisms (which are rigid maps)
generated by the vector field $s$. Conversely, every such
1-parameter group of rigid maps determines a Lie algebra element.
Since every screw element, once integrated, gives a non-trivial
1-parameter group of rigid maps, every screw is the fundamental
vector field of some Lie algebra element. That is, the map
$u_{P*}\vert_e: \mathscr{G} \to S$ is surjective. But  $\mathscr{G}$
and $S$ are two vector spaces of the same dimensionality, thus this
map is also injective. In summary, the screws are the representation
on $E$ of the elements of $\mathfrak{se}(3)$.

%
%
%

It is particularly convenient to study the Lie algebra of $SE(3)$
through their representative vector fields on $E$, indeed many
features, such as the existence of a screw axis for each Lie algebra
element, become very clear.

We can now use several results from the study of Lie groups and
their actions on manifolds \cite{kobayashi63,duistermaat00}. A
central result is that the bijective map $v \to s$ is linear  and
sends the Lie bracket to the commutator of vector fields on $E$. In
some cases the exponential map from the Lie algebra to the Lie group
is surjective (which is not always true as the example of
$SL(2,\mathbb{R})$ shows \cite{duistermaat00,sattinger86}). This is
the case of the group $SE(3)$ (see \cite[Prop. 2.9]{murray94}) thus,
since every element of the Lie algebra $\mathfrak{se}(3)$
corresponds to a screw $s$, and the exponential map corresponds to
the rigid map obtained from the integration of the screw vector
field by a parameter 1, the surjectivity of the exponential map
implies that every rigid map can be accomplished as the result of
the integration along a screw, or, which is the same, by a suitable
rotation along an axis combined with the translation along the same
axis. This is the celebrated Chasles' theorem \cite{selig05})
reformulated and reobtained in the Lie algebra language.

\subsection{Invariant bilinear forms and screw scalar product}

Let us recall that if $x,y\in \mathscr{G}$, then the expression
$\textrm{ad}_x y:= [x,y]$ defines a linear map $\textrm{ad}_x:
\mathscr{G}\to \mathscr{G}$ called {\em adjoint endomorphism}. The
 trace of the composition of two such endomorphisms defines a
symmetric bilinear form
\[
    K(x, y) = \textrm{trace}(\textrm{ad}_x \textrm{ad}_y),
\]
called {\em  Killing form} on $\mathscr{G}$. The Killing form is a
special {\em invariant bilinear form} $Q$ on $\mathscr{G}$, namely
it is bilinear and satisfies (use Eq. (\ref{bjz}))
\[
Q(ad_z x, y)+Q(x,ad_z y)=0.
\]
Proposition \ref{mns} shows that the scalar product of screws
provides an invariant symmetric bilinear form on the Lie algebra. We
wish to establish if there is any connection with the Killing form.

As done in section \ref{gvo} let us introduce an origin $O$ and use
the isomorphism of $S$ with $V\oplus V$. If $x, y\in S$ are
represented by $\begin{pmatrix} \vec{x} \\ \vec{x}^{O}
\end{pmatrix}$ and $\begin{pmatrix} \vec{y}\\ \vec{y}^{O}
\end{pmatrix}$,  then the screw scalar product is
\[
\langle x,y\rangle=\vec{x}\cdot \vec{y}^{O}+ \vec{x}^{O} \cdot
\vec{y}.
\]
According to the result of section \ref{gvo} we have
\begin{align*}
 K(x, y) &= \textrm{trace}\left(\begin{pmatrix} -\vec{x}\times  & 0   \\
 -\vec{x}^{O} \times  &  -\vec{x} \times \end{pmatrix}\begin{pmatrix} -\vec{y}\times  & 0   \\
 -\vec{y}^{O} \times  &  -\vec{y} \times \end{pmatrix} \right) \\
&=\textrm{trace}\begin{pmatrix} \vec{x}\times (\vec{y}\times  & 0   \\
 \vec{x}^{O} \times (\vec{y}\times + \vec{x} \times (\vec{y}^{O}\times&  \vec{x}\times (\vec{y}\times
 \end{pmatrix}\\
&=\textrm{trace}\begin{pmatrix} - (\vec{x}\cdot \vec{y}) I+ \vec{y} (\vec{x}\cdot  & 0   \\
 \vec{x}^{O} \times (\vec{y}\times + \vec{x} \times (\vec{y}^{O} \times & - (\vec{x}\cdot \vec{y}) I+  \vec{y}
  (\vec{x}\cdot
 \end{pmatrix}=-4 \vec{x}\cdot \vec{y} .
\end{align*}
We conclude that the Killing form is an invariant bilinear form
which is distinct from the screw scalar product. It coincides with
the Killing form of the Lie group of rotations alone and thus, it
does not involve the translational information inside the $O$-terms.
Therefore,  the screw scalar product provides a new interesting
invariant bilinear form, which is  sometimes referred to as the {\em
Klein form} of $\mathfrak{se}(3)$.

From Eq. (\ref{lmu}) we find that $ad_z x$ is represented by
\[
ad_z x \xleftrightarrow{origin \ O} \begin{pmatrix} \vec{x} \times \vec{z} \\
\vec{x}^{O}\times \vec{z}+\vec{x} \times \vec{z}^{O}
\end{pmatrix}
\]
from which, using the symmetry properties of the mixed product, we
can check again that the screw scalar product is invariant
\begin{align*}
\langle ad_z x,y \rangle+ \langle  x, ad_z y \rangle =& [( \vec{x}
\times \vec{z})\cdot \vec{y}^{O}+(\vec{x}^{O}\times \vec{z})\cdot
\vec{y}^{O} +(\vec{x} \times \vec{z}^{O})\cdot \vec{y}]\\ &+[(
\vec{y} \times \vec{z})\cdot \vec{x}^{O}+(\vec{y}^{O}\times
\vec{z})\cdot \vec{x} +(\vec{y} \times \vec{z}^{O})\cdot \vec{x}]=0.
\end{align*}

\section{Conclusions}

Screw theory, although venerable, has found some difficulties in
affirming itself in the curricula of the physicist and the
mechanical engineer. This has changed in the last decades, when
screw theory has finally found application in robotics, where its
ability to deal with the composition of rigid motions has proved to
be much superior with respect to treatments based on Euler
coordinates.

We have given here a short introduction to screw theory which can
provide a good starting point to a full self study of the subject.
We started from a coordinate independent definition of screw and we
went to introduce the concepts of screw axis,  screw scalar product
and screw commutator. We introduced the dual space and showed that
any frame induces an isomorphism on $\mathbb{R}^6$ which might be
used to perform calculations. We then went to consider kinematical
and dynamical examples of screws, reformulating the cardinal
equations of mechanics in this language.

Particularly important was the application of the screw scalar
product in the expressions for the kinetic energy and power, in fact
the virtual work (power) is crucial in the formulation of Lagrangian
mechanics. In this connection, we mentioned the importance of
reciprocal screws. Finally, we showed that the space of screws is
nothing but the Lie algebra $\mathfrak{se}(3)$, and that the screw
scalar product is the Klein form.

Philosophically speaking, screw theory clarifies that the most
natural basic dynamical action is not the force, but rather the
force aligned with a mechanical momenta (Remark \ref{nhs}). In
teaching we might illustrate the former action with a pushing finger
and the latter action with a kind of pushing hand. Analogously, the
basic kinematical action is not given by the act of pure rotation
(or translation) but by that of rotation aligned with translation.
Again, for illustration purposes this type of motion can be
represented with that of a (real) screw.

Clearly, in our introduction we had to omit some arguments. For
instance, we did not present neither the cylindroid nor  the
calculus of screws. Nevertheless, the arguments that we touched were
covered in full generality, emphasizing the geometrical foundations
of screw theory and its connection with the Lie group of rigid maps.
We hope that this work will promote screw theory providing an easily
accessible presentation to its key ideas.

\section*{Acknowledgments}
I thank D. Zlatanov and and M. Zoppi for suggesting some of the
cited references. This work has been partially supported by GNFM of
INDAM.

%
%
%
%
%
%
%


\end{document}